\title{Emptiness Problems for Distributed Automata}
\author{
  Antti Kuusisto
  \institute{University of Bremen\\ Germany}
  \email{antti.j.kuusisto@gmail.com}
  \and
  Fabian Reiter
  \institute{IRIF, Université Paris Diderot\\ France}
  \email{fabian.reiter@gmail.com}
}
\theoremstyle{plain}
\newtheorem{theorem}{Theorem}
\newtheorem{proposition}[theorem]{Proposition}
\newtheorem{lemma}[theorem]{Lemma}
\theoremstyle{definition}
\newtheorem{definition}[theorem]{Definition}
\begin{document}

\maketitle

\begin{abstract}
  We investigate the decidability of
  the emptiness problem for three classes of distributed automata.
  These devices operate on finite directed graphs,
  acting as networks of identical finite-state machines
  that communicate in an infinite sequence of synchronous rounds.
  The problem is shown to be decidable in $\LogSpace$
  for a class of forgetful automata,
  where the nodes see the messages received from their neighbors
  but cannot remember their own state.
  When restricted to the appropriate families of graphs,
  these forgetful automata are
  equivalent to classical finite word automata,
  but strictly more expressive than finite tree automata.
  On the other hand,
  we also show that the emptiness problem is undecidable in general.
  This already holds
  for two heavily restricted classes of distributed automata:
  those that reject immediately
  if they receive more than one message per round,
  and those whose state diagram must be acyclic except for self-loops.
\end{abstract}

\section{Introduction}
Recent years have seen increased interest in
automata theoretic approaches to the study of
distributed message-passing algorithms.
Such algorithms are executed concurrently
by all nodes of an arbitrary computer network
in order to solve some graph problem
related to the network structure.
The weakest classes of these algorithms
can be represented as deterministic finite-state machines,
here referred to as \emph{distributed automata},
which run as follows on a finite labeled directed graph:
We place a copy of the same machine on every node of the graph
and let the nodes communicate
in an infinite sequence of synchronous rounds.
In every round,
each node computes its next local state
as a function of its own current state
and the set of current states of its incoming neighbors.
(The states of the incoming neighbors represent
incoming messages sent by the neighbors.)
Acting as a semi-decider,
the machine at a given node accepts
precisely if it visits an accepting state at some point in time.

In a recently initiated research program,
several classes of distributed algorithms
have been given logical characterizations
in the spirit of descriptive complexity theory
\cite{DBLP:books/daglib/Immerman99},
and conversely,
some well-known logics have been provided with novel
machine-oriented characterizations.
First,
in~\cite{DBLP:conf/podc/HellaJKLLLSV12,DBLP:journals/dc/HellaJKLLLSV15},
Hella~et~al.\ established the equivalence of
\emph{local} distributed automata and basic \emph{modal logic};
in the context of distributed computing,
the term “local” means
that nodes stop changing their state after a constant number of rounds
(see, e.g., \cite{DBLP:journals/csur/Suomela13}).
The link with logic was further strengthened by Kuusisto
in~\cite{DBLP:conf/csl/Kuusisto13},
where a logical characterization
for unrestricted (nonlocal) automata was obtained
in terms of a modal-logic-based variant of
Datalog called \emph{modal substitution calculus} (MSC).
Then,
in~\cite{DBLP:conf/lics/Reiter15},
Reiter extended local distributed automata
with a global acceptance condition
and the ability to alternate between
nondeterministic and parallel computations,
thereby providing an automata-theoretic
characterization of \emph{monadic second-order logic} ($\MSO$)
on arbitrary graphs.
Similarly,
the least fixpoint fragment of the modal $\mu$-calculus
has been characterized in~\cite{DBLP:conf/icalp/Reiter17}
using
an asynchronous subclass of nonlocal distributed automata.
Furthermore,
the descriptive complexity approach
of~\cite{DBLP:conf/podc/HellaJKLLLSV12,DBLP:journals/dc/HellaJKLLLSV15}
and~\cite{DBLP:conf/csl/Kuusisto13}
found an application in~\cite{DBLP:journals/corr/Kuusisto14a},
where tools from logic were used
to show that
universally halting distributed automata are necessarily local
if we allow infinite networks into the picture.

As the above equivalences are all effective,
we can immediately settle the decidability question
of the emptiness problem for local automata:
it is decidable for the basic variant
of~\cite{DBLP:conf/podc/HellaJKLLLSV12,DBLP:journals/dc/HellaJKLLLSV15},
but undecidable for the extension considered
in~\cite{DBLP:conf/lics/Reiter15}.
This is because the (finite) satisfiability problem
is $\PSpace$-complete for basic modal logic
but undecidable for $\MSO$.
The problem is also decidable for the asynchronous class
of~\cite{DBLP:conf/icalp/Reiter17},
since (finite) satisfiability for the $\mu$-calculus
is $\ExpTime$-complete.
However,
the corresponding question for unrestricted automata
was left open in~\cite{DBLP:conf/csl/Kuusisto13}.
In the present paper,
we answer this question negatively for the general case
and also consider it for three subclasses of distributed automata.

Our first variant, dubbed \emph{forgetful} automata,
is characterized by the fact
that nodes can see their incoming neighbors' states
but cannot remember their own state.
Although this restriction might seem very artificial,
it bears an intriguing connection to classical automata theory:
forgetful distributed automata turn out to be
equivalent to finite word automata
(and hence $\MSO$)
when restricted to directed paths,
but strictly more expressive than finite tree automata
(and hence $\MSO$)
when restricted to ordered directed trees.
As pointed out in~\mbox{\cite[Prp.~8]{DBLP:conf/csl/Kuusisto13}},
the situation is different on arbitrary directed graphs,
where distributed automata (and hence forgetful ones)
are unable to recognize non-reachability properties
that can be easily expressed in $\MSO$.
Hence,
none of the two formalisms can simulate the other in general.
However,
while satisfiability for $\MSO$ is undecidable,
we obtain a $\LogSpace$ algorithm
that decides the emptiness problem for forgetful distributed automata.

The preceding decidability result begs the question of
what happens if we drop the forgetfulness condition.
Motivated by the equivalence of
finite word automata and
forgetful distributed automata on paths,
we first investigate this question
when restricted to directed paths.
In sharp contrast to the forgetful case,
we find that
for arbitrary distributed automata,
it is undecidable
whether an automaton accepts on some directed path.
Although our proof follows the standard approach
of simulating a Turing machine,
it has an unusual twist:
we exchange the roles of space and time,
in the sense that
the space of the simulated Turing machine $\Machine$
is encoded into
the time of the simulating distributed automaton $\Automaton$,
and conversely,
the time of $\Machine$ is encoded into the space of $\Automaton$.
To lift this result to arbitrary graphs,
we introduce the class of \emph{monovisioned} distributed automata,
where nodes enter a rejecting sink state as soon as
they see more than one state in their incoming neighborhood.
For every distributed automaton~$\Automaton$,
one can construct a monovisioned automaton~$\Automaton'$
that satisfies the emptiness property
if and only if $\Automaton$ does so on directed paths.
Hence,
the emptiness problem is undecidable for monovisioned automata,
and thus also in general.

Our third and last class consists of those distributed automata
whose state diagram does not contain any directed cycles,
except for self-loops;
we call them \emph{quasi-acyclic}.
The motivation for this particular class is threefold.
First,
quasi-acyclicity may be seen as a natural intermediate stage
between local and unrestricted distributed automata,
because local automata
(for which the emptiness problem is decidable)
can be characterized as those automata
whose state diagram is acyclic as long as we ignore sink states
(i.e., states that cannot be left once reached).
Second,
the Turing machine simulation mentioned above
makes crucial use of directed cycles
in the diagram of the simulating automaton,
which suggests that cycles might be the source of undecidability.
Third,
the notion of quasi-acyclic state diagrams also plays a major role
in~\cite{DBLP:conf/icalp/Reiter17},
where it serves as an ingredient
for the aforementioned subclass of asynchronous distributed automata
(for which the emptiness problem is also decidable).
However,
contrary to what one might expect from these clues,
we show that quasi-acyclicity alone is not sufficient
to make the emptiness problem decidable,
thereby giving an alternative proof
of undecidability for the general case.

The remainder of this paper is organized as follows:
We first introduce the formal definitions in Section~\ref{sec:preliminaries}
and establish the connections between forgetful distributed automata
and classical word and tree automata in Section~\ref{sec:classical}.
Then,
we show the positive decidability result for forgetful automata
in Section~\ref{sec:forgetfulness}.
Finally,
we establish the negative results for monovisioned automata
in Section~\ref{sec:space-time}
and for quasi-acyclic automata in Section~\ref{sec:fireworks}.
\section{Preliminaries}
\label{sec:preliminaries}
We denote the set of non-negative integers by
$\defd{\Natural} = \set{0,1,2,\dots}$
and the power set of any set $S$ by \defd{$\powerset{S}$}.

Let $\Alphabet$ be a finite set of symbols
and $\RelCount$ be a positive integer.
A (finite)
\defd{$\Alphabet$-labeled, \mbox{$\RelCount$-relational} directed graph},
abbreviated \defd{\digraph{}},
is a structure
$\Graph = \tuple{\NodeSet, \tuple{\EdgeSet_k}_{1\leq k\leq\RelCount}, \Labeling}$,
where
$\NodeSet$
is a finite nonempty set of nodes,
each $\EdgeSet_k \subseteq \NodeSet \times \NodeSet$
is a set of directed edges, and
$\Labeling \colon \NodeSet \to \Alphabet$
is a labeling that assigns a symbol of $\Alphabet$ to each node.
Isomorphic \digraph{s} are considered to be equal.
If $\Node$ is a node in $\NodeSet$,
we call the pair~$\tuple{\Graph,\Node}$
a \defd{pointed \digraph{}}
with \defd{distinguished node}~$\Node$.
Furthermore,
if $\Node[1]\Node[2]$ is an edge in $\EdgeSet_k$,
then $\Node[1]$ is called an \defd{incoming $k$-neighbor} of~$\Node[2]$,
or simply an \defd{incoming neighbor}.

A \defd{directed rooted tree}, or \defd{\ditree{}},
is a \digraph{}
$\Graph = \tuple{\NodeSet, \tuple{\EdgeSet_k}_{1\leq k\leq\RelCount}, \Labeling}$
that has a distinct node~$\Root$, called the \defd{root},
such that from each node~$\Node$ in~$\NodeSet$,
there is exactly one way to reach~$\Root$
by following the directed edges in $\bigcup_{1\leq k\leq\RelCount} \EdgeSet_k$,
where $\EdgeSet_i \cap \EdgeSet_j = \emptyset$ for $i \neq j$.
A \defd{pointed \ditree{}} is a pointed \digraph{} $\tuple{\Graph,\Root}$
that is composed of a \ditree{} and its root.
Moreover,
an $\RelCount$-relational \ditree{} is called \defd{ordered}
if for $1\leq k\leq\RelCount$,
every node has at most one incoming $k$-neighbor
and every node that has an incoming $(k+1)$-neighbor
also has an incoming $k$-neighbor.
As a special case,
an ordered $1$-relational \ditree{}
is referred to as a \defd{directed path}, or \defd{\dipath{}}.

We now give a general definition of distributed automata
that subsumes all the variants considered in this paper.
Simply put,
a distributed automaton is a deterministic finite-state machine
that reads sets of states instead of the usual alphabetic symbols.
To run such an automaton on a \digraph{},
we place a copy of the same machine on every node of the \digraph{}
and let the nodes communicate
in an infinite sequence of synchronous rounds.
In every round,
each node computes its next local state
as a function of its own current state
and the set of current states of its incoming neighbors.
In order to draw the comparison with
classical word and tree automata in Section~\ref{sec:classical},
we let our distributed automata operate
on labeled, multi-relational \digraph{s}.
Furthermore,
we let the nodes of those \digraph{s}
read their own label in each communication round,
as this will facilitate the definition of forgetful automata.
Whenever possible,
the rather cumbersome notation will later be simplified.

\begin{definition}[Distributed Automaton]
  A \defd{distributed automaton}
  over $\Alphabet$-labeled, $\RelCount$-rela\-tio\-nal \digraph{s}
  is a tuple
  $\Automaton = \tuple{\StateSet,\InitState,\tuple{\TransFunc_\Symbol}_{\Symbol\in\Alphabet},\AcceptSet}$,
  where
  $\StateSet$
  is a finite nonempty set of states,
  $\InitState \in \StateSet$
  is an initial state,
  $\TransFunc_\Symbol \colon \StateSet \times (\powerset{\StateSet})^\RelCount \to \StateSet$
  is a (local) transition function associated with label~$\Symbol \in \Alphabet$, and
  $\AcceptSet \subseteq \StateSet$
  is a set of accepting states.
\end{definition}

Let
$\Graph = \tuple{\NodeSet, \tuple{\EdgeSet_k}_{1\leq k\leq\RelCount}, \Labeling}$
be a $\Alphabet$-labeled, $\RelCount$-relational \digraph{}.
The \defd{run} of $\Automaton$ on $\Graph$ is an infinite sequence
$\Run = \tuple{\Run_0, \Run_1, \Run_2, \dots}$
of maps
$\Run_\Time \colon \NodeSet \to \StateSet$,
called \defd{configurations},
which are defined inductively as follows,
for $\Time \in \Natural$ and $\Node[2] \in \NodeSet$:
\begin{equation*}
  \Run_0(\Node[2]) = \InitState
  \qquad\text{and}\qquad
  \Run_{\Time+1}(\Node[2]) =
  \TransFunc_{\Labeling(\Node[2])}
      \Bigl(\Run_\Time(\Node[2]),\,
            \bigtuple{\setbuilder{\Run_\Time(\Node[1])}{\Node[1]\Node[2] \in \EdgeSet_k}
                     }_{1\leq k\leq\RelCount}
      \Bigr).
\end{equation*}
For $\Node \in \NodeSet$,
the automaton~$\Automaton$ \defd{accepts}
the pointed \digraph{} $\tuple{\Graph,\Node}$
if $\Node$ visits an accepting state at some point
in the run $\Run$ of $\Automaton$ on $\Graph$,
i.e., if there exists $\Time \in \Natural$
such that $\Run_\Time(\Node) \in \AcceptSet$.
The \defd{language} of~$\Automaton$ (or language recognized by~$\Automaton$)
is the set of all pointed \digraph{s} that $\Automaton$ accepts.

A distributed automaton is called \defd{forgetful}
if in each round,
the nodes can see their neighbors' states but cannot remember their own state.
Formally,
for
$\Automaton = \tuple{\StateSet,\InitState,\tuple{\TransFunc_\Symbol}_{\Symbol\in\Alphabet},\AcceptSet}$,
being forgetful means that
$\TransFunc_\Symbol(\State,\vec{\NeighborSet}) = \TransFunc_\Symbol(\State',\vec{\NeighborSet})$
for all $\Symbol \in \Alphabet$,\, $\State,\State' \in \StateSet$
and $\vec{\NeighborSet} \in (\powerset{\StateSet})^\RelCount$.
Therefore,
we can represent the transition functions of such an automaton as
$\TransFunc_\Symbol \colon (\powerset{\StateSet})^\RelCount \to \StateSet$.

On the other hand,
when we consider automata that are not forgetful,
we will simplify them to have a single transition function.
Instead of letting the nodes read their own label~$\Symbol$ and
choose the appropriate function $\TransFunc_\Symbol$ in each round,
we can force them to store the label in their local state and
combine all the transition functions into a single one.
Notation can be further lightened
by limiting ourselves to $1$-relational \digraph{s}.
Hence,
we shall sometimes regard a distributed automaton as a tuple
$\Automaton = \tuple{\StateSet,\InitFunc,\TransFunc,\AcceptSet}$,
where
$\InitFunc \colon \Alphabet \to \StateSet$
is an initialization function,
$\TransFunc \colon \StateSet \times \powerset{\StateSet} \to \StateSet$
is a transition function,
and $\StateSet$ and $\AcceptSet$ are as before.
The semantics is the obvious one:
each node~$\Node$ is initialized to $\InitFunc(\Labeling(\Node))$,
computes its next state by evaluating $\TransFunc$ on
its current state and the set of states of its incoming neighbors,
and accepts if at some point in time it visits a state in $\AcceptSet$.

The central concern of this paper is
the (general) \defd{emptiness problem} for several classes of distributed automata.
Given an automaton~$\Automaton$,
the problem is to decide effectively
whether the language of~$\Automaton$ is nonempty,
i.e., whether there is a pointed \digraph{} $\tuple{\Graph,\Node}$
that is accepted by~$\Automaton$.
Similarly,
the \defd{\dipath{}-emptiness problem}
is to decide if $\Automaton$ accepts some pointed \dipath{}.
\section{Comparison with classical automata}
\label{sec:classical}
The purpose of this section is to motivate
our interest in forgetful distributed automata
by establishing their connection
with classical word and tree automata.

\begin{proposition}
  \label{prp:equivalence-word-automata}
  When restricted to the class of pointed \dipath{s},
  forgetful distributed automata are
  equivalent to finite word automata
  (and thus to monadic second-order logic).
\end{proposition}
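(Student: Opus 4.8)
The plan is to work through the evident bijection between pointed \dipath{s} and nonempty words over $\Alphabet$: a pointed \dipath{} is a chain $\Node[1] \to \dots \to \Node[n]$ whose distinguished node is necessarily the sink $\Node[n]$ (its root), so it is faithfully described by the word $w = \Labeling(\Node[1])\cdots\Labeling(\Node[n]) \in \Alphabet^+$, and every word in $\Alphabet^+$ arises this way. Since a distributed automaton accepts a pointed \digraph{} precisely when the distinguished node assumes an accepting state at some round, the whole proposition reduces to understanding the sequence of states taken by $\Node[n]$ during the run of a forgetful automaton $\Automaton = \tuple{\StateSet, \InitState, \tuple{\TransFunc_\Symbol}_{\Symbol\in\Alphabet}, \AcceptSet}$ on such a chain.

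So the first step is that run analysis. Abbreviate $\State_\Symbol := \TransFunc_\Symbol(\emptyset)$ and let $f_\Symbol \colon \StateSet \to \StateSet$ be $\State \mapsto \TransFunc_\Symbol(\{\State\})$; these are well defined because a forgetful transition function ignores the node's own state and each $\Node[i]$ with $i \ge 2$ has a unique incoming neighbour. A routine induction on $i$ shows that $\Node[1]$ holds $\State_{\Labeling(\Node[1])}$ from round $1$ on and that $\Node[i]$ reaches its final state $f_{\Labeling(\Node[i])} \circ \dots \circ f_{\Labeling(\Node[2])}\bigl(\State_{\Labeling(\Node[1])}\bigr)$ by round $i$. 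The delicate point --- and the step I expect to be the real obstacle --- is the \emph{transient} behaviour of $\Node[n]$: at round $t$ with $1 \le t < n$ it holds the state obtained by running the $f$-chain of the length-$t$ suffix of $w$ on $\InitState$. Hence the finite set of states ever visited by $\Node[n]$ consists of $\InitState$, the state reached by processing each proper nonempty suffix of $w$ from $\InitState$, and the final state; and $\Automaton$ accepts the pointed \dipath{} exactly when this set meets $\AcceptSet$. A forgetful node, having no memory of itself, recomputes from scratch every round, which is why these suffix-runs show up and must be accounted for.

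For the direction from forgetful automata to word automata I would build a deterministic finite automaton that scans $w = a_1 \cdots a_n$ left to right while maintaining exactly this data. After reading $a_1$ it is in state $(\emptyset, \State_{a_1})$, and on reading a further letter $b$ it sends a pair $(S, r) \in \powerset{\StateSet} \times \StateSet$ to $\bigl(\{\, f_b(\State) \mid \State \in S \,\} \cup \{f_b(\InitState)\},\ f_b(r)\bigr)$; a pair $(S,r)$ is accepting iff $(\{\InitState\} \cup S \cup \{r\}) \cap \AcceptSet \neq \emptyset$. By the run analysis, $S$ is always the set of states seen along the suffix-runs so far and $r$ the current final-state candidate, so this DFA accepts $w$ iff $\Automaton$ accepts the corresponding pointed \dipath{}.

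For the converse, take a word automaton --- deterministic without loss of generality, with states $Q$, initial state $q_0$, transition map $\delta$, and final set $F$ --- and simulate it by a forgetful automaton over $Q \uplus \{\bot\}$ with initial state $\bot$, putting $\TransFunc_\Symbol(\emptyset) = \delta(q_0, \Symbol)$, $\TransFunc_\Symbol(\{\bot\}) = \bot$, and $\TransFunc_\Symbol(\{q\}) = \delta(q, \Symbol)$ for $q \in Q$ (the values on larger argument sets never matter on \dipath{s}). The fresh ``waiting'' state $\bot$ neutralises the transient phenomenon: an induction like the previous one shows that $\Node[i]$ stays in $\bot$ until round $i$ and thereafter holds $\delta^{*}(a_1 \cdots a_i)$, so with $\AcceptSet := F$ (which misses $\bot$) the node $\Node[n]$ accepts iff $w \in L$. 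Together the two constructions give the stated equivalence, and the parenthetical remark about $\MSO$ is then just the classical B\"uchi--Elgot--Trakhtenbrot theorem for words (the empty word, which has no \dipath{} counterpart, being the only innocuous discrepancy).
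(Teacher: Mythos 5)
Your proposal is correct and follows essentially the same route as the paper: a fresh waiting state that delays each node until its predecessor has stabilized for the word-automaton-to-forgetful direction, and a subset construction tracking the states visited by the $i$-th node of the path for the converse. The only cosmetic difference is that you keep the stabilized state and the transient states in separate components $(S,r)$ of the DFA state, whereas the paper folds everything into a single subset of $\StateSet$ whose recurrence handles the initial, transient, and final states at once; your explicit identification of the transient states as runs over the proper nonempty suffixes of the word is a nice observation that the paper leaves implicit in its induction.
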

\begin{proof}
  Let us denote a (deterministic) finite word automaton
  over some finite alphabet~$\Alphabet$
  by a tuple
  $\ClAutomaton = \tuple{\ClStateSet, \ClInitState, \ClTransFunc, \ClAcceptSet}$,
  where
  $\ClStateSet$ is the set of states,
  $\ClInitState$ is the initial state,
  $\ClTransFunc \colon \ClStateSet \times \Alphabet \to \ClStateSet$
  is the transition function, and
  $\ClAcceptSet$ is the set of accepting states.

  Given such a word automaton~$\ClAutomaton$,
  we construct a forgetful distributed automaton
  ${\Automaton = \tuple{\StateSet,\InitState,\tuple{\TransFunc_\Symbol}_{\Symbol\in\Alphabet},\AcceptSet}}$
  that simulates~$\ClAutomaton$ on $\Alphabet$-labeled \dipath{s}.
  For this,
  it suffices to set $\StateSet = \ClStateSet \cup \set{\Waiting}$,\,
  $\InitState = \Waiting$,\, $\AcceptSet = \ClAcceptSet$, and
  \begin{equation*}
    \TransFunc_\Symbol(\NeighborSet) =
    \begin{cases*}
      \ClTransFunc(\ClInitState, \Symbol) & if $\NeighborSet = \emptyset$, \\
      \ClTransFunc(\ClState, \Symbol)     & if $\NeighborSet = \set{\ClState}$
                                            for some $\ClState \in \ClStateSet$, \\
      \Waiting                            & otherwise.
    \end{cases*}
  \end{equation*}
  When $\Automaton$ is run on a \dipath{},
  each node~$\Node$ starts in a waiting phase, represented by~$\Waiting$,
  and remains idle until its predecessor
  has computed the state~$\ClState$
  that $\ClAutomaton$ would have reached
  just before reading the local symbol~$\Symbol$ of~$\Node$.
  (If there is no predecessor, $\ClState$ is set to~$\ClInitState$.)
  Then, $\Node$ switches to the state~$\ClTransFunc(\ClState, \Symbol)$
  and stays there forever.
  Consequently,
  the distinguished last node of the \dipath{} will end up
  in the state reached by~$\ClAutomaton$
  at the end of the word,
  and it accepts if and only if~$\ClAutomaton$ does.

  For the converse direction,
  we convert a given forgetful distributed automaton
  $\Automaton = \tuple{\StateSet,\InitState,\tuple{\TransFunc_\Symbol}_{\Symbol\in\Alphabet},\AcceptSet}$
  into the word automaton
  $\ClAutomaton = \tuple{\ClStateSet, \ClInitState, \ClTransFunc, \ClAcceptSet}$
  with components
  $\ClStateSet = \powerset{\StateSet}$,\,
  $\ClInitState = \emptyset$,\,
  $\ClAcceptSet = \setbuilder{\StateSubset \subseteq \StateSet}
                             {\StateSubset \cap \AcceptSet \neq \emptyset}$,
  and
  \begin{equation*}
    \ClTransFunc(\ClState, \Symbol) =
    \set{\InitState} \cup 
    \begin{cases*}
      \set{\TransFunc_\Symbol(\emptyset)}
      & if $\ClState = \ClInitState$, \\
      \setbuilder{\TransFunc_\Symbol(\set{\State})}{\State \in \ClState}
      & otherwise.
    \end{cases*}
  \end{equation*}
  On any $\Alphabet$-labeled \dipath{}~$\Graph$,
  our construction guarantees that
  the set of states visited by~$\Automaton$ at the $i$-th node
  is equal to
  the state that~$\ClAutomaton$ reaches
  just after processing the $i$-th symbol
  of the word associated with~$\Graph$.
  We can easily verify this by induction on~$i$:
  At the first node,
  which is labeled with $\Symbol_1$,
  automaton~$\Automaton$ starts in state~$\InitState$
  and then remains forever in state~$\TransFunc_{\Symbol_1}(\emptyset)$.
  Node number~$i + 1$ also starts in~$\InitState$,
  and transitions to
  $\TransFunc_{\Symbol_{i+1}}(\set{\State_\Time^i})$ at time~$\Time + 1$,
  where $\Symbol_{i+1}$ is the node's own label
  and~$\State_\Time^i$ is the state of its predecessor at time~$\Time$.
  In agreement with this behavior,
  we know by the induction hypothesis and the definition of $\ClTransFunc$
  that the state of~$\ClAutomaton$ after reading $\Symbol_{i+1}$
  is precisely
  $\set{\InitState} \cup
   \setbuilder{\TransFunc_{\Symbol_{i+1}}(\set{\State_\Time^i})}{\Time \in \Natural}$.
  As a result,
  the final state reached by~$\ClAutomaton$ will be accepting
  if and only if
  $\Automaton$ visits some accepting state at the last node.
\end{proof}

A (deterministic, bottom-up) finite tree automaton
over $\Alphabet$-labeled, $\RelCount$-relational ordered \ditree{s}
can be defined as a tuple
$\ClAutomaton = \tuple{\ClStateSet, \tuple{\ClTransFunc_k}_{0\leq k\leq\RelCount}, \ClAcceptSet}$,
where
$\ClStateSet$ is a finite nonempty set of states,
$\ClTransFunc_k \colon \ClStateSet^k \times \Alphabet \to \ClStateSet$
is a transition function of arity~$k$, and
$\ClAcceptSet \subseteq \ClStateSet$ is a set of accepting states.
Such an automaton assigns a state of~$\ClStateSet$
to each node of a given pointed \ditree{},
starting from the leaves and working its way up to the root.
If node~$\Node$ is labeled with symbol~$\Symbol$
and its $k$ children have been assigned the states
$\ClState_1, \dots, \ClState_k$
(following the numbering order of the $k$ first edge relations),
then $\Node$ is assigned the state
$\ClTransFunc_k(\ClState_1, \dots, \ClState_k, \Symbol)$.
Note that leaves are covered by the special case $k = 0$.
Based on this,
the pointed \ditree{} is accepted
if and only if
the state at the root belongs to~$\ClAcceptSet$.
For a more detailed presentation
see, e.g., \cite[\S~3.3]{DBLP:books/ws/automata2012/Loding12}.

\begin{proposition}
  When restricted to the class of pointed ordered \ditree{s},
  forgetful distributed automata are
  strictly more expressive than finite tree automata
  (and thus than monadic second-order logic).
\end{proposition}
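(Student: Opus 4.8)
The plan is to prove the two inclusions separately. The easy direction is that every finite tree automaton is simulated by some forgetful distributed automaton on pointed ordered \ditree{s}: given $\ClAutomaton = \tuple{\ClStateSet, \tuple{\ClTransFunc_k}_{0\leq k\leq\RelCount}, \ClAcceptSet}$, take $\Automaton$ with states $\ClStateSet \cup \set{\Waiting}$, initial state $\Waiting$, accepting set $\ClAcceptSet$, and a transition function that keeps a node in $\Waiting$ while some incoming neighbor is in $\Waiting$ and otherwise moves it to $\ClTransFunc_m(\ClState_1,\dots,\ClState_m,\Symbol)$, where $m$ is the in-degree of the node, $\Symbol$ its label, and $\ClState_i$ the state of its incoming $i$-neighbor. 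On an ordered \ditree{} each incoming $k$-neighborhood is empty or a singleton, so this is well defined, and it is forgetful. An induction on subtree height shows that every node eventually stabilizes in the state $\ClAutomaton$ assigns to the subtree below it, visiting no other non-waiting state; hence the root ends up in $\ClAutomaton$'s state for the whole tree, and $\Automaton$ accepts iff $\ClAutomaton$ does.

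The interesting direction is strictness. The key observation is that, since a forgetful node never sees its own state, the state of a node after $\Time$ rounds depends only on the subtree below it truncated at depth $\Time$, with a node at depth exactly $\Time$ appearing merely as an anonymous frontier leaf still in the initial state; consequently, a forgetful automaton accepts a \ditree{} precisely when, for some $\Time$, the depth-$\Time$ truncation of that \ditree{} belongs to a fixed regular tree language. I would then consider the concrete forgetful automaton $\Automaton$ over $3$-relational \ditree{s} with states $\set{0,1}$, initial state $1$, accepting set $\set{0}$, and transition function mapping the all-empty tuple to $1$ and every other tuple $(\NeighborSet_1,\dots,\NeighborSet_3)$ to the number of indices $i$ with $\NeighborSet_i = \set{1}$, taken modulo~$2$. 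A routine induction shows that the state of a node after $\Time$ rounds is the parity of the number of leaves of its depth-$\Time$-truncated subtree; hence $\Automaton$ accepts $\tuple{\Graph,\Root}$ iff some depth-truncation of $\Graph$ has an even number of leaves.

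To show this language is recognized by no finite tree automaton, I would use the \ditree{s} $\Graph_{n,m}$ (for $n,m\in\Natural$) whose root has three children: a \emph{comb} of spine length $n$ (a directed path of length $n$, each of whose non-bottom nodes additionally carries one leaf child), a comb of spine length $m$, and a single leaf. Unwinding the definition of $\Automaton$, the root of $\Graph_{n,m}$ is in state $0$ at round $\Time+1$ exactly when the depth-$\Time$ truncations of the two combs have different leaf-count parities; computing these parities — they alternate with period two up to round $n$, resp.\ $m$, then stay constant — shows that $\Automaton$ accepts $\Graph_{n,m}$ iff $n \neq m$. On the other hand, a bottom-up tree automaton processes a comb of spine length $n$ by iterating a single unary map on its finite state set, so the state it assigns to the comb's root is ultimately periodic in $n$; a pumping argument then supplies $n < m$ for which the tree automaton reaches the same state at the roots of both combs, forcing it to accept or reject $\Graph_{n,m}$ and $\Graph_{n,n}$ together, a contradiction. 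Since finite tree automata on ordered \ditree{s} recognize exactly the $\MSO$-definable tree languages, this language also witnesses strict inclusion relative to $\MSO$.

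The step I expect to be the main obstacle is finding this separating language. Timing-based tricks — for instance comparing the heights of two subtrees through the rounds at which their computations stabilize — do not work, because forgetfulness prevents a node from retaining a fact across rounds or from emitting a one-round pulse; one is essentially forced into the truncation characterization, and the real work is to exhibit a regular tree-language property (here, parity of the leaf count) such that the class of \ditree{s} having some depth-truncation with that property is no longer regular. Once the automaton is chosen, the parity computation and the pumping argument are routine.
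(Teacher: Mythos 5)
Your proof is correct, and while the first half (simulating a tree automaton by a forgetful automaton via a waiting state) coincides with the paper's construction, your strictness argument takes a genuinely different route. The paper separates the two models with a three-state forgetful automaton on $2$-relational \ditree{s} that accepts exactly the trees that are \emph{not} perfectly balanced: every leaf switches from $\Waiting$ to $\Finished$ in round one, the switch propagates upward one level per round, and a node fires the intermediate state $\Accepting$ the moment it sees exactly one child still waiting; non-regularity is then obtained by pumping the \emph{complement} (perfectly balanced trees) and invoking closure of regular tree languages under complementation. You instead prove a small general lemma --- the state of a node at time $\Time$ is a regular bottom-up function of the depth-$\Time$ truncation of its subtree, so a forgetful automaton accepts iff some truncation lies in a fixed regular language --- and then exhibit a two-state parity-of-leaves automaton on $3$-relational \ditree{s} whose language (``two combs of unequal spine length'') is killed by a direct pigeonhole argument on the unary map a tree automaton iterates along a comb. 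Your truncation lemma is a nice reusable characterization the paper never states, and your witness avoids the detour through complementation; the paper's example is arguably leaner (two relations, no auxiliary lemma) and, notably, refutes the meta-claim in your last paragraph: its balanced-tree detector is precisely a timing trick that compares the heights of sibling subtrees and emits a one-round pulse, which forgetfulness does not prevent, since the pulse is triggered by the \emph{children's} states rather than by remembered local history.
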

\begin{proof}
  To convert a tree automaton
  $\ClAutomaton = \tuple{\ClStateSet, \tuple{\ClTransFunc_k}_{0\leq k\leq\RelCount}, \ClAcceptSet}$
  into a forgetful distributed automaton
  $\Automaton = \tuple{\StateSet,\InitState,\tuple{\TransFunc_\Symbol}_{\Symbol\in\Alphabet},\AcceptSet}$
  that is equivalent to $\ClAutomaton$
  over $\Alphabet$-labeled, $\RelCount$-relational ordered \ditree{s},
  we use a simple generalization of the construction
  in the proof of Proposition~\ref{prp:equivalence-word-automata}:
  $\StateSet = \ClStateSet \cup \set{\Waiting}$,\,
  $\InitState = \Waiting$,\,
  $\AcceptSet = \ClAcceptSet$, and
  \begin{equation*}
    \TransFunc_\Symbol(\vec{\NeighborSet}) =
    \begin{cases*}
      \ClTransFunc_k(\ClState_1, \dots, \ClState_k, \Symbol)
      & if $\vec{\NeighborSet} =
        \bigtuple{\set{\ClState_1}, \dots, \set{\ClState_k}, \emptyset, \dots, \emptyset}$
        for some $\ClState_1,\dots,\ClState_k \in \ClStateSet$, \\
      \Waiting
      & otherwise.
    \end{cases*}
  \end{equation*}

  In contrast,
  a conversion in the other direction is not always possible,
  as can be seen from the following example on binary \ditree{s}.
  Consider the forgetful distributed automaton
  $\Automaton' = \tuple{\set{\Waiting,\Finished,\Accepting}, \Waiting, \TransFunc, \set{\Accepting}}$,
  with
  \begin{equation*}
    \TransFunc(\NeighborSet_1,\NeighborSet_2) =
    \begin{cases*}
      \Waiting
      & if $\NeighborSet_1 = \NeighborSet_2 = \set{\Waiting}$ \\
      \Finished
      & if $\NeighborSet_1, \NeighborSet_2 \in \set{\emptyset, \set{\Finished}}$ \\
      \Accepting
      & otherwise.
    \end{cases*}
  \end{equation*}
  When run on an unlabeled, $2$-relational ordered \ditree{},
  $\Automaton'$ accepts at the root
  precisely if the \ditree{} is \emph{not} perfectly balanced,
  i.e., if there exists a node
  whose left and right subtrees have different heights.
  To achieve this,
  each node starts in the waiting state~$\Waiting$,
  where it remains as long as it has two children
  and those children are also in~$\Waiting$.
  If the \ditree{} is perfectly balanced,
  then all the leaves switch permanently from~$\Waiting$ to~$\Finished$
  in the first round,
  their parents do so in the second round,
  their parents' parents in the third round,
  and so forth,
  until the signal reaches the root.
  Therefore,
  the root will transition directly from~$\Waiting$ to~$\Finished$,
  never visiting state~$\Accepting$,
  and hence the pointed \ditree{} is rejected.
  On the other hand,
  if the \ditree{} is not perfectly balanced,
  then there must be some lowermost internal node~$\Node$
  that does not have two subtrees of the same height
  (in particular, it might have only one child).
  Since its subtrees are perfectly balanced,
  they behave as in the preceding case.
  At some point in time,
  only one of~$\Node$'s children will be in state~$\Waiting$,
  at which point $\Node$ will switch to state~$\Accepting$.
  This triggers an upward-propagating chain reaction,
  eventually causing the root to also visit~$\Accepting$,
  and thus to accept.
  Note that $\Accepting$ is just an intermediate state;
  regardless of whether or not the \ditree{} is perfectly balanced,
  every node will ultimately end up in~$\Finished$.
  
  To prove that $\Automaton'$ is not equivalent to any tree automaton,
  one can simply invoke the pumping lemma for regular tree languages
  to show that the complement language of~$\Automaton'$
  is not recognizable by any tree automaton.
  The claim then follows from the fact
  that regular tree languages are closed under complementation.
\end{proof}
\section{Exploiting forgetfulness}
\label{sec:forgetfulness}
We now give an algorithm deciding
the emptiness problem for forgetful distributed automata
(on arbitrary \digraph{s}).
Its space complexity is linear in the number of states
of the given automaton.
However,
as an uncompressed binary encoding of a distributed automaton
requires space exponential in the number of states,
this results in $\LogSpace$ complexity.
Obviously,
the statement might not hold anymore
if the automaton were instead represented by a more compact device,
such as a logical formula.

\begin{theorem}
  The emptiness problem for forgetful distributed automata is decidable
  in $\LogSpace$.
\end{theorem}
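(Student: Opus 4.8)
The plan is to exploit forgetfulness to collapse the automaton's reachable behaviour to a single deterministic orbit in the finite lattice $\powerset{\StateSet}$, and then decide emptiness by a bounded reachability check using only $O(\RelCount \cdot |\StateSet|)$ bits of work space — which, since any binary encoding of $\Automaton$ already tabulates each transition function over a domain of size $2^{\RelCount|\StateSet|}$, is logarithmic in the input. The crucial observation is that, for a forgetful automaton, the state of a node at time $\Time+1$ is a function of its label and of the tuple of sets of states of its incoming neighbours at time $\Time$ alone; the node's own state at time $\Time$ is irrelevant. This lets us track, for each $\Time$, the set $\StateSubset_\Time \subseteq \StateSet$ of all states occurring at \emph{some} node of \emph{some} \digraph{} at time $\Time$, and one expects $\StateSubset_0 = \set{\InitState}$ and $\StateSubset_{\Time+1} = F(\StateSubset_\Time)$, where
\[
  F(\StateSubset) = \setbuilder{\TransFunc_\Symbol(\vec{\NeighborSet})}{\Symbol \in \Alphabet,\; \vec{\NeighborSet} \in (\powerset{\StateSubset})^\RelCount}.
\]

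The first step is to prove this recurrence, by induction on $\Time$, i.e.\ that $\StateSubset_\Time$ is \emph{exactly} the set of states realizable at time $\Time$. That every realizable state lies in $\StateSubset_\Time$ is routine: the states of the incoming $k$-neighbours of any node at time $\Time$ form a subset of $\StateSubset_\Time$ by the induction hypothesis, so the node's next state lies in $F(\StateSubset_\Time)$. The converse — that every state in $F(\StateSubset_\Time)$ is genuinely realizable — is where forgetfulness is needed, and I expect it to be the main (if mild) obstacle. Given $\State = \TransFunc_\Symbol(\NeighborSet_1, \dots, \NeighborSet_\RelCount)$ with each $\NeighborSet_k \subseteq \StateSubset_\Time$, I would pick, for every $k$ and every state in $\NeighborSet_k$, a pointed \digraph{} witnessing that state at time $\Time$; take the disjoint union of all of these; add one fresh node with label $\Symbol$; and, for each $k$, draw an incoming $k$-edge from each chosen witness to the fresh node. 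Since all new edges point \emph{into} the fresh node, the runs on the old parts are unaffected, so at time $\Time$ the incoming $k$-neighbours of the fresh node display precisely the set $\NeighborSet_k$; hence at time $\Time+1$ the fresh node is in state $\State$, no matter what it was doing before. It is exactly this last clause that uses forgetfulness and that fails for general automata.

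From the recurrence it follows at once that the language of $\Automaton$ is nonempty if and only if $\StateSubset_\Time \cap \AcceptSet \neq \emptyset$ for some $\Time \in \Natural$. Since all $\StateSubset_\Time$ lie in the finite set $\powerset{\StateSet}$ and the sequence is a deterministic orbit of $F$, it is eventually periodic with pre-period plus period at most $2^{|\StateSet|}$; so it meets $\AcceptSet$ at some point if and only if it does so within its first $2^{|\StateSet|}$ terms. The decision procedure is therefore: maintain a current set $\StateSubset$ (initially $\set{\InitState}$) and a counter; iterate at most $2^{|\StateSet|}$ times, each round testing whether $\StateSubset \cap \AcceptSet \neq \emptyset$ and, if not, replacing $\StateSubset$ by $F(\StateSubset)$; accept if the test ever succeeds, and reject otherwise.

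Finally I would verify the space bound. Storing $\StateSubset$ and the counter takes $O(|\StateSet|)$ bits; evaluating $F$ requires enumerating the $\RelCount$-tuples in $(\powerset{\StateSubset})^\RelCount$ — an $O(\RelCount \cdot |\StateSet|)$-bit counter — and, for each, looking up a value of some $\TransFunc_\Symbol$ in the input. All told this is $O(\RelCount \cdot |\StateSet|)$ bits of work space (plus a few bits to index labels and the input), hence logarithmic in the size of the input, so the emptiness problem for forgetful distributed automata is in $\LogSpace$.
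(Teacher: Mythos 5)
Your proposal is correct and follows essentially the same route as the paper's own proof: the same reachable-set recurrence $\StateSubset_{\Time+1} = F(\StateSubset_\Time)$, the same witness construction (disjoint union of witnessing pointed \digraph{s} feeding edges into a fresh node, with forgetfulness guaranteeing the fresh node's next state depends only on its label and its neighbours' states), the same $2^{\card{\StateSet}}$-step periodicity bound, and the same space accounting against the exponentially large lookup-table encoding of the input. No gaps; nothing further to add.
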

\begin{proof}
  Let
  $\Automaton = \tuple{\StateSet,\InitState,\tuple{\TransFunc_\Symbol}_{\Symbol\in\Alphabet},\AcceptSet}$
  be some forgetful distributed automaton
  over $\Alphabet$-labeled, $\RelCount$-relational \digraph{s}.
  Consider the infinite sequence of sets of states
  $\StateSubset_0, \StateSubset_1, \StateSubset_2 \cdots$
  such that
  $\StateSubset_\Time$ contains precisely those states
  that can be visited by~$\Automaton$
  at some node in some \digraph{} at time~$\Time$.
  That is,
  $\State \in \StateSubset_\Time$
  if and only if
  there exists a pointed \digraph{} $\tuple{\Graph,\Node}$
  such that
  $\Run_\Time(\Node) = \State$,
  where $\Run$ is the run of $\Automaton$ on $\Graph$.
  From this point of view,
  the language of $\Automaton$ is nonempty
  precisely if
  there is some $\Time \in \Natural$
  for which
  $\StateSubset_\Time \cap \AcceptSet \neq \emptyset$.

  By definition,
  we have $\StateSubset_0 = \set{\InitState}$.
  Furthermore,
  exploiting the fact that $\Automaton$ is forgetful,
  we can specify a simple function
  $\BigTransFunc \colon \powerset{\StateSet} \to \powerset{\StateSet}$
  such that
  $\StateSubset_{\Time + 1} = \BigTransFunc(\StateSubset_\Time)$:
  \begin{equation*}
    \BigTransFunc(\StateSubset[1]) =
    \bigsetbuilder
      {\TransFunc_\Symbol(\vec{\StateSubset[2]})}
      {\text{
          $\Symbol \in \Alphabet$ and
          $\vec{\StateSubset[2]} \in (\powerset{\StateSubset[1]})^\RelCount$
        }
      }
  \end{equation*}
  Obviously,
  $\StateSubset_{\Time + 1} \subseteq \BigTransFunc(\StateSubset_\Time)$.
  To see that
  $\StateSubset_{\Time + 1} \supseteq \BigTransFunc(\StateSubset_\Time)$,
  assume we are given a pointed \digraph{}
  $\tuple{\Graph_\State,\Node_\State}$
  for each state~$\State \in \StateSubset_\Time$
  such that
  $\Node_\State$ visits $\State$ at time~$\Time$
  in the run of~$\Automaton$ on~$\Graph_\State$.
  (Such a pointed \digraph{}
  must exist by the definition of $\StateSubset_\Time$.)
  Now,
  for any
  $\Symbol \in \Alphabet$ and
  $\vec{\StateSubset[2]} =
   \tuple{\StateSubset[2]_1, \dots, \StateSubset[2]_\RelCount}
   \in (\powerset{\StateSubset_\Time})^\RelCount$,
  we construct a new \digraph{}~$\Graph$ as follows:
  Starting with a single $\Symbol$-labeled node~$\Node$,
  we add a (disjoint) copy of~$\Graph_\State$
  for each state~$\State$ that occurs in some set~$\StateSubset[2]_k$.
  Then,
  we add a $k$-edge from $\Node_\State$ to $\Node$
  if and only if $\State \in \StateSubset[2]_k$.
  Each node~$\Node_\State$ behaves
  the same way in~$\Graph$ as in~$\Graph_\State$
  because $\Node$ has no influence on its incoming neighbors.
  Since $\Automaton$ is forgetful,
  the state of~$\Node$ at time~$\Time + 1$ depends solely on
  its own label and
  its incoming neighbor's states at time~$\Time$.
  Consequently,
  $\Node$ visits
  the state~$\TransFunc_\Symbol(\vec{\StateSubset[2]})$
  at time $\Time + 1$,
  and thus
  $\TransFunc_\Symbol(\vec{\StateSubset[2]}) \in \StateSubset_{\Time + 1}$.

  Now,
  we know that the sequence
  $\StateSubset_0, \StateSubset_1, \StateSubset_2 \cdots$
  must be eventually periodic
  because its generator function $\BigTransFunc$
  maps the finite set $\powerset{\StateSet}$ to itself.
  Hence,
  it suffices to consider the prefix of length $\card{\powerset{\StateSet}}$
  in order to determine whether
  $\StateSubset_\Time \cap \AcceptSet \neq \emptyset$
  for some $\Time \in \Natural$.
  This leads to the following simple algorithm,
  which decides the emptiness problem for forgetful automata.
  \begin{align*}
    \textsc{Empty}(\Automaton): \quad
    & \StateSubset \gets \set{\InitState} \\
    & \text{\textbf{repeat} at most $\card{\powerset{\StateSet}}$ times}: \\
    & \qquad \StateSubset \gets \BigTransFunc(\StateSubset) \\
    & \qquad \text{\textbf{if} $\StateSubset \cap \AcceptSet \neq \emptyset$}:\;
             \text{\textbf{return} true} \\
    & \text{\textbf{return} false}
  \end{align*}

  It remains to analyze the space complexity of this algorithm.
  For that,
  we assume that the binary encoding of~$\Automaton$
  given to the algorithm
  contains a lookup table
  for each transition function~$\TransFunc_\Symbol$
  and a bit array representing~$\AcceptSet$,
  which amounts to an asymptotic size of
  $\Theta \bigl(
     \card{\Alphabet} \cdot
     \card{\powerset{\StateSet}}^\RelCount \cdot
     \log\card{\StateSet}
   \bigr)$
  input bits.
  To implement the procedure \textsc{Empty},
  we need
  $\card{\StateSet}$ bits of working memory
  to represent the set~$\StateSubset$
  and another $\card{\StateSet}$ bits for the loop counter.
  Furthermore,
  we can compute $\BigTransFunc(\StateSubset[1])$
  for any given set $\StateSubset[1] \subseteq \StateSet$
  by simply iterating over all $\Symbol \in \Alphabet$ and
  $\vec{\StateSubset[2]} \in (\powerset{\StateSet})^\RelCount$,
  and adding $\TransFunc_\Symbol(\vec{\StateSubset[2]})$
  to the returned set
  if all components of~$\vec{\StateSubset[2]}$
  are subsets of~$\StateSubset[1]$.
  This requires
  $\log\card{\Alphabet} + \card{\StateSet} \cdot \RelCount$
  additional bits to keep track of the iteration progress,
  $\Theta \bigl(
     \log\card{\Alphabet} +
     \card{\StateSet} \cdot \RelCount +
     \log\log\card{\StateSet}
   \bigr)$
  bits to store pointers into the lookup tables,
  and $\card{\StateSet}$ bits to store the intermediate result.
  In total,
  the algorithm uses
  $\Theta \bigl(
     \log\card{\Alphabet} + \card{\StateSet} \cdot \RelCount
   \bigr)$
  bits of working memory,
  which is logarithmic in the size of the input.
\end{proof}
\section{Exchanging space and time}
\label{sec:space-time}
In this section,
we first show the undecidability of
the \dipath{}-emptiness problem for arbitrary distributed automata,
and then lift that result to the general emptiness problem.

\begin{theorem}
  \label{thm:dipath-emptiness}
  The \dipath{}-emptiness problem for distributed automata is undecidable.
\end{theorem}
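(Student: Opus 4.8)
The plan is to reduce from the halting problem: given a deterministic single-tape Turing machine $\Machine$, I would construct a distributed automaton $\Automaton = \Automaton_\Machine$ such that $\Automaton$ accepts some pointed \dipath{} if and only if $\Machine$ halts on the empty input. Since the latter is undecidable, this yields undecidability of the \dipath{}-emptiness problem. The whole difficulty lies in designing $\Automaton_\Machine$, and the guiding principle is the space--time exchange announced in the introduction: on a directed path $v_1 \to v_2 \to \dots \to v_n$ (where $v_n$ is the distinguished root), node $v_k$ is made responsible for the $k$-th configuration of $\Machine$, while the tape cells of a single configuration are streamed out one after another over successive communication rounds.

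Concretely, I would maintain the invariant that, for a suitable linear offset $c_k$ (e.g.\ $c_k = 2k+1$), the state of $v_k$ at round $c_k + j$ encodes the content of the $j$-th tape cell of $\Machine$'s $k$-th configuration --- i.e.\ the symbol in that cell plus the information whether the head is there and, if so, in which control state --- whereas in the earlier rounds $v_k$ sits in a ``priming'' phase. Node $v_1$ has no incoming neighbour, so via the empty-neighbourhood branch of its transition function it emits the initial configuration on the empty input: a head-cell in state $q_0$ followed by blanks. Every other node $v_k$ keeps a bounded buffer holding the last two cells received from its unique predecessor $v_{k-1}$; since in a single-tape machine the new content of cell $j$ depends only on the old contents of cells $j-1, j, j+1$, node $v_k$ can compute and emit cell $j$ of configuration $k$ as soon as it has received cells $j-1, j, j+1$ of configuration $k-1$ from $v_{k-1}$, and the recurrence $c_k = c_{k-1} + 2$ is exactly what guarantees this. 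For acceptance, $v_k$ enters (and never leaves) an accepting state as soon as the cell it emits reveals the head in a halting state; making $\Machine$ loop forever on its halting configuration then gives that $\Automaton$ accepts the $n$-node pointed \dipath{} iff $\Machine$ halts within $n$ steps, hence $\Automaton$ accepts some pointed \dipath{} iff $\Machine$ halts.

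The routine part is the finite-state bookkeeping that realizes this picture: encoding a cell as a symbol with an optional head-state, running the startup phase, handling the left end of the tape (a fixed end-marker suffices, and one may assume $\Machine$ never moves left of its initial cell), and dealing with the ever-growing but never-counted blank tail to the right of the used portion of the tape --- all of which fit into a bounded window plus a few control states, so $\StateSet$ stays finite and depends only on $\Machine$. Correctness is then a straightforward nested induction: on $k$ (using that a node in a \dipath{} sees only its predecessor, so downstream nodes cause no interference) and, within each $k$, on the cell index $j$.

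The main obstacle I expect is not any isolated step but getting the timing exactly right: fixing the offsets $c_k$, proving the invariant ``at round $c_k + j$, node $v_k$ holds cell $j$ of configuration $k$'' through the priming phase and the two ends of the tape, and making sure the blank-tail and head-frontier are tracked without any counter. Once this invariant is in place, the equivalence ``$\Automaton_\Machine$ accepts some \dipath{} $\iff \Machine$ halts'' is immediate, and undecidability follows.
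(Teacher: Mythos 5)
Your proposal is correct and follows essentially the same route as the paper's proof: a reduction from the halting problem in which the \dipath{} represents the timeline of the Turing machine, each node streams out one configuration cell by cell over successive rounds, and a two-cell offset between consecutive nodes lets a node compute cell $j$ of its configuration from cells $j-1$, $j$, $j+1$ of its predecessor's, with acceptance triggered upon seeing the halting state. The paper likewise keeps a three-cell window in each node's state and uses exactly this head start; your extra remark about looping on the halting configuration is a harmless variant of the paper's acceptance condition.
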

\begin{proof}[Proof sketch]
  We proceed by reduction from the halting problem for Turing machines.
  For our purposes,
  a Turing machine operates deterministically with one head on a single tape,
  which is one-way infinite to the right and initially empty.
  The problem consists of determining
  whether the machine will eventually reach a designated halting state.
  We show a way of encoding the computation of a Turing machine~$\Machine$
  into the run of a distributed automaton~$\Automaton$
  over unlabeled \digraph{s},
  such that the language of $\Automaton$ contains a pointed \dipath{}
  if and only if~$\Machine$ reaches its halting~state.

  Note that since \dipath{s} are oriented,
  the communication between their nodes is only one-way.
  Hence,
  we cannot simply represent (a section of) the Turing tape as a \dipath{}.
  Instead,
  the key idea of our simulation is to exchange the roles of space and time,
  in the sense that
  the space of $\Machine$ is encoded into the time of $\Automaton$,
  and the time of $\Machine$ into the space of $\Automaton$.
  Assuming the language of~$\Automaton$ contains a \dipath{},
  we will think of that \dipath{} as representing the timeline of $\Machine$,
  such that
  each node corresponds to
  a single point in time in the computation of $\Machine$.
  Roughly speaking,
  when running~$\Automaton$,
  the node~$\Node_\Time$ corresponding to time~$\Time$ will “traverse”
  the configuration~$\Config_\Time$ of $\Machine$ at time~$\Time$.
  Here, “traversing” means that
  the sequence of states of $\Automaton$ visited by $\Node_\Time$
  is an encoding of $\Config_\Time$ read from left to right,
  supplemented with some additional bookkeeping information.

  The first element of the \dipath{}, node~$\Node_0$,
  starts by visiting a state of $\Automaton$ representing
  an empty cell that is currently read by $\Machine$ in its initial state.
  Then it transitions to another state that simply represents an empty cell,
  and remains in such a state forever after.
  Thus $\Node_0$ does indeed “traverse” $\Config_0$.
  We will show that
  it is also possible for any other node~$\Node_\Time$
  to “traverse” its corresponding configuration~$\Config_\Time$,
  based on the information it receives from $\Node_{\Time-1}$.
  In order for this to work,
  we shall give $\Node_{\Time-1}$ a head start of two cells,
  so that $\Node_\Time$ can compute the content of cell $\Cell$ in $\Config_\Time$
  based on the contents of cells $\Cell-1$,\, $\Cell$ and $\Cell+1$ in $\Config_{\Time-1}$.

  Node~$\Node_\Time$ enters an accepting state of $\Automaton$ precisely if
  it “sees” the halting state of $\Machine$ during its “traversal” of $\Config_\Time$.
  Hence,
  $\Automaton$ accepts the pointed \dipath{} of length~$\Time$
  if and only if $\Machine$ reaches its halting state at time~$\Time$.

  We now describe the inner workings of $\Automaton$ in a semi-formal way.
  In parallel,
  the reader might want to have a look at Figure~\ref{fig:exchange-space-time},
  which illustrates the construction by means of an example.
  Let $\Machine$ be represented by the tuple
  $\tuple{\MStateSet,\MSymbolSet,\MInitState,\MBlank,\MTransFunc,\MHaltState}$,
  where
  $\MStateSet$ is the set of states,
  $\MSymbolSet$ is the tape alphabet,
  $\MInitState$ is the initial state,
  $\MBlank$ is the blank symbol,
  $\MTransFunc\colon
   (\MStateSet\setminus\set{\MHaltState}) \times \MSymbolSet \to
   \MStateSet \times \MSymbolSet \times \set{\Left,\Right}$
  is the transition function, and
  $\MHaltState$ is the halting state.
  From this,
  we construct $\Automaton$ as $\tuple{\StateSet,\InitState,\TransFunc,\AcceptSet}$,
  with
  the state set
  $\StateSet = (\set{\Waiting} \,\cup\, (\MStateSet\times\MSymbolSet) \,\cup\, \MSymbolSet)^3$,
  the initial state
  $\InitState = \tuple{\Waiting,\Waiting,\Waiting}$,
  the transition function $\TransFunc$ specified informally below, and
  the accepting set $\AcceptSet$ that contains precisely those states
  that have $\MHaltState$ in their third component.
  In keeping with the intuition that each node of the \dipath{}
  “traverses” a configuration of $\Machine$,
  the third component of its state indicates
  the content of the “currently visited” cell $\Cell$.
  The two preceding components keep track of the recent history,
  i.e.,
  the second component always holds the content of the previous cell $\Cell-1$,
  and the first component that of $\Cell-2$.
  In the following explanation,
  we concentrate on updating the third component,
  tacitly assuming that the other two are kept up to date.
  The special symbol $\Waiting$ indicates that no cell has been “visited”,
  and we say that a node is in the waiting phase
  while its third component is~$\Waiting$.

  In the first round,
  $\Node_0$ sees that it does not have any incoming neighbor,
  and thus exits the waiting phase
  by setting its third component to $\tuple{\MInitState,\MBlank}$,
  and after that, it sets it to $\MBlank$ for the remainder of the run.
  Every other node $\Node_\Time$ remains in the waiting phase
  as long as its incoming neighbor's second component is $\Waiting$.
  This ensures a delay of two cells with respect to $\Node_{\Time-1}$.
  Once $\Node_\Time$ becomes active,
  given the \emph{current} state $\tuple{c_1,c_2,c_3}$ of $\Node_{\Time-1}$,
  it computes the third component $d_3$ of its own \emph{next} state $\tuple{d_1,d_2,d_3}$
  as follows:
  If none of the components $c_1$, $c_2$, $c_3$ “contain the head of $\Machine$”,
  i.e., if none of them lie in $\MStateSet\times\MSymbolSet$,
  then it simply sets~$d_3$ to be equal to~$c_2$.
  Otherwise,
  a computation step of $\Machine$ is simulated in the natural way.
  For instance,
  if $c_3$ is of the form $\tuple{\MState,\MSymbol}$,
  and $\MTransFunc(\MState,\MSymbol) = \tuple{\MState',\MSymbol',\Left}$,
  then $d_3$ is set to $\tuple{\MState',c_2}$.
  This corresponds to the case where, at time $\Time-1$,
  the head of $\Machine$ is located to the right of $\Node_\Time$'s next “position”
  and moves to the left.
  As another example,
  if $c_2$ is of the form $\tuple{\MState,\MSymbol}$,
  and $\MTransFunc(\MState,\MSymbol) = \tuple{\MState',\MSymbol',\Right}$,
  then $d_3$ is set to $\MSymbol'$.
  The remaining cases are handled analogously.

  Note that, thanks to the two-cell delay between adjacent nodes,
  the head of $\Machine$ always “moves forward” in the time of $\Automaton$,
  although it may move in both directions with respect to the space of $\Machine$
  (see Figure~\ref{fig:exchange-space-time}).
\end{proof}

\begin{figure}[tbp]
  \centering
  \begin{tikzpicture}[
    thick,
    every matrix/.style = {matrix of math nodes, nodes in empty cells, row sep={4.55ex,between origins}, inner sep=0ex,
                           nodes={draw,minimum size=3ex,inner sep=0ex,anchor=center}},
    b/.style  = {fill=black},               
    g/.style  = {fill=lightgray},           
    hg/.style = {draw=darkgray},            
    gg/.style = {g, hg, line width=0.33ex}, 
    G/.style  = {g, hg, line width=0.41ex}, 
    ww/.style = {hg, line width=0.33ex},    
    W/.style  = {hg, line width=0.41ex},    
    plain/.style = {draw opacity=0}
  ]
  \newcommand{\x}[1]{\mathbf{#1}}
  \matrix[column sep={3ex,between origins}] (machine) {
    \x0       &           &            &        &        \\
    |[g]|     & \x1       &            &        &        \\
    |[gg]|    & |[gg]|    & |[ww]| \x2 & |[ww]| & |[ww]| \\
    |[g]|     & |[g]| \x1 & |[g]|      &        &        \\
    |[g]| \x0 & |[g]|     & |[g]|      &        &        \\
              & |[g]| \x3 & |[g]|      &        &        \\
  };
  \matrix[right=0.4ex of machine] (machine-dots) {
    |[plain]|\cdots \\ |[plain]|\cdots \\ |[plain]|\cdots \\ |[plain]|\cdots \\ |[plain]|\cdots \\ |[plain]|\cdots \\
  };
  \matrix[right=7.5ex of machine-dots, column sep={4.45ex,between origins}, nodes={circle}] (automaton) {
    |[b]| & \x0   &       &       &       &       &       &           &           &           &       &       &          \\
    |[b]| & |[b]| & |[b]| & |[g]| & \x1   &       &       &           &           &           &       &       &          \\
    |[b]| & |[b]| & |[b]| & |[b]| & |[b]| & |[G]| & |[G]| & |[W]| \x2 & |[W]|     & |[W]|     &       &       &          \\
    |[b]| & |[b]| & |[b]| & |[b]| & |[b]| & |[b]| & |[b]| & |[g]|     & |[g]| \x1 & |[g]|     &       &       &          \\
    |[b]| & |[b]| & |[b]| & |[b]| & |[b]| & |[b]| & |[b]| & |[b]|     & |[b]|     & |[g]| \x0 & |[g]| & |[g]| &          \\
    |[b]| & |[b]| & |[b]| & |[b]| & |[b]| & |[b]| & |[b]| & |[b]|     & |[b]|     & |[b]|     & |[b]| &       & |[g]|\x3 \\
  };
  \matrix[right=1ex of automaton] (automaton-dots) {
    |[plain]|\cdots \\ |[plain]|\cdots \\ |[plain]|\cdots \\ |[plain]|\cdots \\ |[plain]|\cdots \\ |[plain]|\cdots \\
  };
  \foreach \x [evaluate = \x as \xplusone using int(\x+1)] in {1,...,5}
    \foreach \y in {1,...,13}
      \draw[->] (automaton-\x-\y) -- (automaton-\xplusone-\y);
  \node[above=2.5ex of machine.north west,anchor=mid west] (machine-space) {\small{space}};
  \draw[->] ([xshift=0.5ex]machine-space.mid east) -- ++(5ex,0);
  \node[left=2.5ex of machine.north west,anchor=mid west,rotate=-90] (machine-time) {\small{time}};
  \draw[->] ([yshift=-0.5ex]machine-time.mid east) -- ++(0,-6ex);
  \node[above=2.5ex of automaton.north west,anchor=mid west] (automaton-time) {\small{time}};
  \draw[->] ([xshift=0.5ex]automaton-time.mid east) -- ++(5ex,0);
  \node[left=2.5ex of automaton.north west,anchor=mid west,rotate=-90] (automaton-space) {\small{space}};
  \draw[->] ([yshift=-0.5ex]automaton-space.mid east) -- ++(0,-5ex);
  \node[below=3ex of machine,anchor=mid] {\small{\emph{Turing machine}}};
  \node[below=3ex of automaton,anchor=mid] {\small{\emph{Distributed automaton}}};
\end{tikzpicture}
  \caption{
    Exchanging space and time to prove Theorem~\ref{thm:dipath-emptiness}.
    The left-hand side depicts the computation of a
    Turing machine with state set $\set{\mathbf{0},\mathbf{1},\mathbf{2},\mathbf{3}}$
    and tape alphabet
    $\set{
      \tikz{\node[minimum width=2ex]{};\node[minimum width=1ex,draw]{};},
      \tikz{\node[minimum width=2ex]{};\node[minimum width=1ex,draw,fill=lightgray]{};}
    }$.
    On the right-hand side,
    this machine is simulated by a distributed automaton run on a \dipath{}.
    Waiting nodes are represented in black,
    whereas active nodes display the content of the “currently visited” cell of the Turing machine
    (i.e., only the third component of the states is shown).
  }
  \label{fig:exchange-space-time}
\end{figure}

To infer from Theorem~\ref{thm:dipath-emptiness}
that the general emptiness problem for distributed automata
is also undecidable,
we now introduce the notion of
\defd{monovisioned} automata,
which have the property
that nodes “expect” to see
no more than one state in their incoming neighborhood at any given time.
More precisely,
a distributed automaton
$\Automaton = \tuple{\StateSet,\InitFunc,\TransFunc,\AcceptSet}$
is monovisioned if it has a rejecting sink state
$\RejectState \in \StateSet \setminus \AcceptSet$,
such that
$\TransFunc(\State,\NeighborSet) = \RejectState$
whenever $\card{\NeighborSet} > 1$
or $\RejectState \in \NeighborSet$
or $\State = \RejectState$,
for all $\State \in \StateSet$ and $\NeighborSet \subseteq \StateSet$.
Obviously,
for every distributed automaton,
we can construct a monovisioned automaton
that has the same acceptance behavior on \dipath{s}.
Furthermore,
as shown by means of the next two lemmas,
the emptiness problem for monovisioned automata
is equivalent to its restriction to \dipath{s}.
All put together,
we get the desired reduction
from the \dipath{}-emptiness problem to the general emptiness problem.

\begin{lemma}
  \label{lem:tree-model-property}
  The language of a distributed automaton is nonempty
  if and only if it contains a pointed \ditree{}.
\end{lemma}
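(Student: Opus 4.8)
The plan is to prove the nontrivial implication: if some pointed \digraph{} $\tuple{\Graph,\Node}$ lies in the language of a distributed automaton $\Automaton = \tuple{\StateSet,\InitState,\tuple{\TransFunc_\Symbol}_{\Symbol\in\Alphabet},\AcceptSet}$, then so does some pointed \ditree{}; the converse is immediate, since every \ditree{} is in particular a \digraph{}. The phenomenon to exploit is that distributed automata are \emph{spatially local}: by the inductive definition of the run, $\Run_\Time(\Node)$ is determined by the states at time $\Time-1$ of the incoming neighbors of $\Node$ alone, and hence, unfolding the recursion, by nothing more than the substructure of $\Graph$ consisting of the nodes reachable from $\Node$ by a backward walk of length at most $\Time$. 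Accordingly, I would fix a time $\Time$ with $\Run_\Time(\Node) \in \AcceptSet$ and replace $\Graph$ by its depth-$\Time$ backward unravelling at $\Node$.

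Concretely, I would let the nodes of the new graph $\Graph'$ be the backward walks $w = (\Node_0, k_1, \Node_1, \dots, k_m, \Node_m)$ with $\Node_0 = \Node$, $0 \le m \le \Time$, and $\Node_i\Node_{i-1} \in \EdgeSet_{k_i}$ for $1 \le i \le m$; place such a $w$ as an incoming $k$-neighbor, in $\Graph'$, of the walk obtained from it by deleting the last two entries, precisely when $k_m = k$; and let $w$ carry the label $\Labeling(\Node_m)$, writing $\pi(w) = \Node_m$ for its last node. Then $\Graph'$ is finite, because the in-degree of every node is bounded by $\RelCount \cdot \card{\NodeSet}$ and the depth by $\Time$; and it is a \ditree{} with root $\Root = (\Node)$, since every non-root walk has exactly one outgoing edge --- to its truncation --- and iterating truncation reaches $\Root$ in a unique way, while the fact that each edge of $\Graph'$ records a single relation index $k_m$ makes the edge relations of $\Graph'$ pairwise disjoint.

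The heart of the proof is the claim that, writing $\Run'$ for the run of $\Automaton$ on $\Graph'$, one has $\Run'_{\Time'}(w) = \Run_{\Time'}(\pi(w))$ for every node $w$ of $\Graph'$ of depth $m$ and every $\Time' \in \Natural$ with $m + \Time' \le \Time$. I would prove this by induction on $\Time'$. The case $\Time' = 0$ holds because $\Run'_0(w) = \InitState = \Run_0(\pi(w))$ (and, in the alternative presentation with label-dependent initialization, because $w$ and $\pi(w)$ share a label). For the inductive step, the hypothesis $m + (\Time'+1) \le \Time$ forces $m + 1 \le \Time$, so $w$ is not a leaf at maximal depth, whence its incoming $k$-neighbors in $\Graph'$ are exactly the one-step extensions of $w$, which $\pi$ maps bijectively onto the incoming $k$-neighbors of $\pi(w)$ in $\Graph$; each such extension $w'$ has depth $m + 1$ and satisfies $(m+1) + \Time' \le \Time$, so the induction hypothesis applies to $w'$ at time $\Time'$, and also to $w$ itself at time $\Time'$; feeding these equalities, together with the common label, through the transition function $\TransFunc_{\Labeling(\pi(w))}$ yields $\Run'_{\Time'+1}(w) = \Run_{\Time'+1}(\pi(w))$. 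Taking $w = \Root$ and $\Time' = \Time$ gives $\Run'_\Time(\Root) = \Run_\Time(\Node) \in \AcceptSet$, so $\Automaton$ accepts the pointed \ditree{} $\tuple{\Graph', \Root}$, as desired.

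I expect the only genuine work to lie in setting up the unravelling carefully enough that two facts become transparent: that the incoming neighbors of $w$ in $\Graph'$ correspond bijectively (via $\pi$) to those of $\pi(w)$ in $\Graph$, and that this correspondence holds precisely under the side condition $m + 1 \le \Time$ that the induction supplies --- at a maximal-depth leaf $w$ there are no incoming neighbors in $\Graph'$, which is exactly why the claim must be restricted to times $\Time'$ with $m + \Time' \le \Time$. Everything else --- the induction itself, the finiteness of $\Graph'$, and the verification that $\Graph'$ is a \ditree{} --- is then routine, directly from the construction.
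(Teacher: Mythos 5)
Your proof is correct and follows essentially the same route as the paper: both truncate the backward tree-unravelling of $\tuple{\Graph,\Node}$ at depth~$\Time$, where $\Time$ is a time at which the distinguished node accepts, and verify by induction that a node at depth~$m$ in the unravelling agrees with its projection for all times $\Time'$ with $m + \Time' \le \Time$. Your write-up merely makes explicit the walk-based construction and the depth-indexed induction that the paper leaves as "easy to check."
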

\begin{proof}[Proof sketch]
  We slightly adapt the notion of \emph{tree-unraveling},
  which is a standard tool in modal logic
  (see, e.g., \cite[Def.~4.51]{BlackburnRV02} or \cite[\S~3.2]{BlackburnB07}).
  Consider any distributed automaton~$\Automaton$.
  Assume that $\Automaton$ accepts some pointed \digraph{}
  $\tuple{\Graph,\Node[2]}$,
  and let~$\Time \in \Natural$ be the first point in time
  at which $\Node$ visits an accepting state.
  Based on that,
  we can easily construct a pointed \ditree{}
  $\tuple{\Graph',\Node[2]'}$
  that is also accepted by~$\Automaton$.
  First of all,
  the root~$\Node[2]'$ of~$\Graph'$
  is chosen to be a copy of~$\Node[2]$.
  On the next level of the \ditree{},
  the incoming neighbors of~$\Node[2]'$
  are chosen to be fresh copies
  $\Node[1]'_1,\dots,\Node[1]'_n$
  of $\Node[2]$'s incoming neighbors
  $\Node[1]_1,\dots,\Node[1]_n$.
  Similarly,
  the incoming neighbors of
  $\Node[1]'_1,\dots,\Node[1]'_n$
  are fresh copies of the incoming neighbors of
  $\Node[1]_1,\dots,\Node[1]_n$.
  If $\Node[1]_i$ and $\Node[1]_j$
  have incoming neighbors in common,
  we create distinct copies of those neighbors
  for $\Node[1]_i'$ and $\Node[1]_j'$.
  This process is iterated until we obtain
  a \ditree{} of height~$\Time$.
  It is easy to check
  that~$\Node$ and~$\Node'$ visit
  the same sequence of states
  $\State_0, \State_1, \dots, \State_\Time$
  during the first~$\Time$ communication rounds.
\end{proof}

\begin{lemma}
  \label{lem:dipath-model-property}
  The language of a monovisioned distributed automaton is nonempty
  if and only if it contains a pointed \dipath{}.
\end{lemma}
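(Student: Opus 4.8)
The "if" direction is immediate: if the language contains a pointed \dipath{}, then it is nonempty. So the work is entirely in the "only if" direction. By Lemma~\ref{lem:tree-model-property}, it suffices to start from an accepted pointed \ditree{} rather than an arbitrary pointed \digraph{}, so the plan is to take a monovisioned automaton $\Automaton$, an accepted pointed \ditree{} $\tuple{\Graph,\Root}$, and the first time $\Time$ at which $\Root$ enters an accepting state, and then to extract from $\Graph$ a \dipath{} that is still accepted. The key observation driving this is the defining property of monovisioned automata: as soon as any node ever sees two distinct states among its incoming neighbors (or sees the rejecting sink, or is itself in the sink), it falls into the rejecting sink $\RejectState$ and stays there. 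Hence along the accepting run, no node that influences the state of $\Root$ at time $\Time$ can ever have had two incoming neighbors carrying distinct states at the relevant times — otherwise that "corruption" would propagate up the tree and reach $\Root$ within $\Time$ rounds, forcing $\Root$ into $\RejectState$ instead of an accepting state.

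The main step is therefore to identify, for each level $i \in \set{0,\dots,\Time}$ of the tree, a single node $\Node_i$ on the path from $\Root$ down into the tree, such that restricting attention to the chain $\Node_0 = \Root, \Node_1, \Node_2, \dots$ reproduces exactly the states seen in the original run. Concretely, I would argue by downward induction on the rounds: $\Node_0 = \Root$; given $\Node_i$ at depth $i$ (for $i < \Time$), look at the states of the incoming neighbors of $\Node_i$ at times $0,1,\dots,\Time-i-1$ in the original run. If at every such time all incoming neighbors of $\Node_i$ carry the same state as each other (in particular $\Node_i$ has at most one relevant neighbor up to state-equivalence), pick any one incoming neighbor and call it $\Node_{i+1}$; the transition $\TransFunc$ evaluated at a singleton $\set{\State}$ gives the same result as at the original (identical) multiset, so $\Node_i$ makes the same transitions in the restricted chain. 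If instead some incoming neighbor of $\Node_i$ at some time $\Time' \le \Time-i-1$ carried a state different from another incoming neighbor, then $\Node_i$ is in $\RejectState$ at time $\Time'+1 \le \Time-i$, and propagating upward one level per round, $\Root$ is in $\RejectState$ at time $\Time'+1+i \le \Time$, contradicting that the first accepting time of $\Root$ is $\Time$ and that $\RejectState \notin \AcceptSet$ is a sink. So the "bad" case never occurs along the chain we are building. Having selected $\Node_0,\dots,\Node_{\Time}$, the chain they induce — with only the $1$-edge relation between consecutive nodes, all other neighbors discarded — is a \dipath{} of length $\Time$ on which $\Root$ still visits its accepting state at time $\Time$; note discarding the non-selected subtrees only ever \emph{reduces} the set a node sees (from a singleton to a singleton, or to $\emptyset$), and since we are already in the branch where each relevant neighborhood was a singleton, the run on the selected chain agrees with the original run up to time $\Time$ at every $\Node_i$.

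A small technical point to handle carefully is the multi-relational setting: the "$\dipath{}$" notion here is $1$-relational, so when forming the final \dipath{} I must make sure that each $\Node_i$ in the original tree genuinely had a $1$-incoming-neighbor to promote to $\Node_{i+1}$, rather than only $k$-neighbors for $k \ge 2$. But this too follows from monovisioned-ness combined with the ordered-tree structure delivered by Lemma~\ref{lem:tree-model-property}: if $\Node_i$ had any incoming neighbor at all in some round (which it must, for at least the rounds where its state still changes toward feeding an accepting state to the root — otherwise one reduces $\Time$), a monovisioned automaton tolerates at most one incoming neighbor-state anyway, and we may simply reindex that single neighbor as a $1$-neighbor; collapsing all edge relations to the single relation $\EdgeSet_1$ is harmless precisely because at most one neighbor state is ever present. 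The main obstacle I anticipate is bookkeeping the "at most one distinct neighbor-state per round, for all rounds up to the deadline" invariant cleanly across the downward induction — in particular making the propagation-of-corruption argument airtight, i.e. verifying that a single discrepancy at depth $i$, time $\Time'$, really does force $\RejectState$ at the root by time $\Time'+i \le \Time$, using that $\RejectState$ is a genuine sink that also swallows any neighbor in state $\RejectState$. Once that invariant is in place, the construction of the \dipath{} and the verification that the accepting state survives are routine.
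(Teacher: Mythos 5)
Your proposal is correct and follows essentially the same route as the paper: reduce to a pointed \ditree{} via Lemma~\ref{lem:tree-model-property}, take the first accepting time $\Time$ at the root, and use the monovisioned property to argue by induction that every node on a root-to-leaf chain sees a singleton (or empty) neighborhood at all relevant times --- since otherwise $\RejectState$ would propagate up one level per round and reach the root by time $\Time$ --- so that pruning to a single chain preserves the run and yields an accepted \dipath{}. (Your worry about the multi-relational setting is moot, since monovisioned automata are defined in the paper over the simplified $1$-relational form.)
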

\begin{proof}[Proof sketch]
  Consider any monovisioned distributed automaton~$\Automaton$
  whose language is nonempty.
  By Lemma~\ref{lem:tree-model-property},
  $\Automaton$ accepts some pointed \ditree{} $\tuple{\Graph,\Node[2]}$.
  Let~$\Time \in \Natural$ be the first point in time
  at which $\Node[2]$ visits an accepting state.
  Now,
  it is easy to prove by induction that
  for all $i \in \set{0, \dots, \Time}$,
  sibling nodes at depth~$i$
  traverse the same sequence of states
  $\State_0, \State_1, \dots, \State_{\Time - i}$
  between times~$0$ and~$\Time - i$,
  and this sequence does not contain the rejecting state~$\RejectState$.
  Thus,
  $\Automaton$ also accepts
  any \dipath{} from some node at depth~$\Time$ to the root.
\end{proof}
\section{Timing a firework show}
\label{sec:fireworks}
We now show
that the emptiness problem is undecidable
even for quasi-acyclic automata.
This also provides an alternative, but more involved
undecidability proof for the general case.

A distributed automaton
$\Automaton = \tuple{\StateSet,\InitFunc,\TransFunc,\AcceptSet}$
is said to be \defd{quasi-acyclic}
if its state diagram does not contain any directed cycles,
except for self-loops.
More formally,
this means that for every sequence
$\State_1, \State_2, \dots, \State_n$ of states in~$\StateSet$
such that
$\State_1 = \State_n$ and
$\TransFunc(\State_i,\NeighborSet_i) = \State_{i+1}$
for some $\NeighborSet_i \subseteq \StateSet$,
it must hold that all states of the sequence are the same.
Notice that our proof of Theorem~\ref{thm:dipath-emptiness}
does not go through
if we consider only quasi-acyclic automata.

It is straightforward to see that
quasi-acyclicity is preserved under a standard product construction,
similar to the one employed for finite automata on words.
Hence, we have the following closure property,
which will be used in the subsequent undecidability proof.

\begin{lemma}
  \label{lem:closure-quasi-acyclic}
  The class of languages recognizable by quasi-acyclic distributed automata
  is closed under union and intersection.
\end{lemma}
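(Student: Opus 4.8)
The plan is to carry the standard product construction for finite automata over to the distributed setting. Fix two quasi-acyclic distributed automata $\Automaton_1 = \tuple{\StateSet_1,\InitFunc_1,\TransFunc_1,\AcceptSet_1}$ and $\Automaton_2 = \tuple{\StateSet_2,\InitFunc_2,\TransFunc_2,\AcceptSet_2}$ over a common alphabet. Their product has state set $\StateSet_1 \times \StateSet_2$, initialization $\InitFunc(\Symbol) = \tuple{\InitFunc_1(\Symbol),\InitFunc_2(\Symbol)}$, and, writing $\pi_i$ for the projection onto the $i$-th component, the transition function
\[
  \TransFunc(\tuple{s_1,s_2},\NeighborSet)
  = \bigtuple{\,\TransFunc_1\bigl(s_1,\pi_1(\NeighborSet)\bigr),\ \TransFunc_2\bigl(s_2,\pi_2(\NeighborSet)\bigr)\,}
  \qquad \text{for } \NeighborSet \subseteq \StateSet_1 \times \StateSet_2 .
\]
First I would check, by a routine induction on the round number, that running the product on any \digraph{} $\Graph$ yields the ``pair of runs'': if $\Run^1$ and $\Run^2$ are the runs of $\Automaton_1$ and $\Automaton_2$ on $\Graph$, then $\Run_\Time(\Node) = \tuple{\Run^1_\Time(\Node),\Run^2_\Time(\Node)}$ for every node $\Node$ and every round $\Time$; the point to observe is that projecting the set of pair-states held by $\Node$'s incoming neighbors recovers precisely the set of $\StateSet_i$-states that $\Automaton_i$ would see there. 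Next I would check that quasi-acyclicity is preserved: any directed cycle $\tuple{p_1,q_1}\to\cdots\to\tuple{p_n,q_n}=\tuple{p_1,q_1}$ in the product's state diagram projects (via $\pi_1$ and $\pi_2$, using the corresponding projected neighbor sets) to directed cycles $p_1\to\cdots\to p_n$ in $\Automaton_1$ and $q_1\to\cdots\to q_n$ in $\Automaton_2$, so by quasi-acyclicity of each factor all the $p_i$ agree and all the $q_i$ agree, whence the cycle is trivial.

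\emph{Union} then follows immediately: declare accepting the states in $(\AcceptSet_1 \times \StateSet_2) \cup (\StateSet_1 \times \AcceptSet_2)$. A node visits such a state at some round iff $\Run^1_\Time(\Node)\in\AcceptSet_1$ or $\Run^2_\Time(\Node)\in\AcceptSet_2$ for some $\Time$, and since the existential quantifier over $\Time$ distributes over this disjunction, the product accepts $\tuple{\Graph,\Node}$ iff $\Automaton_1$ or $\Automaton_2$ does.

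\emph{Intersection} is where the only genuine obstacle lies, because acceptance here is an eventuality (``an accepting state is visited at some round'') rather than a condition on a final state, so the two automata may reach accepting states in \emph{different} rounds and the naive accepting set $\AcceptSet_1 \times \AcceptSet_2$ is wrong. I would repair this by first \emph{latching} acceptance in each factor: from $\Automaton_i$ build $\widehat{\Automaton_i}$ with state set $\StateSet_i \times \set{0,1}$, where the extra bit records whether an $\AcceptSet_i$-state has been seen so far --- it is initialized to $1$ exactly when $\InitFunc_i(\Symbol)\in\AcceptSet_i$, set to $1$ as soon as the newly computed $\StateSet_i$-component lands in $\AcceptSet_i$, never reset, and $\StateSet_i \times \set{1}$ is the new accepting set. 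Then $\widehat{\Automaton_i}$ recognizes the same language as $\Automaton_i$, and it is still quasi-acyclic: the bit only ever moves along $0\to0$, $0\to1$, $1\to1$, so on any cycle it is constant, and erasing it leaves a cycle of $\Automaton_i$, which is constant by hypothesis. Finally I would form the product of $\widehat{\Automaton_1}$ and $\widehat{\Automaton_2}$ and declare accepting exactly those states in which \emph{both} latch bits equal $1$. Since each bit, once $1$, stays $1$, the product visits such a state iff both latch bits are eventually $1$, i.e.\ iff $\Automaton_1$ accepts and $\Automaton_2$ accepts; and since both the latching step and the product step preserve quasi-acyclicity, the resulting automaton witnesses closure under intersection.

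In short, the construction is the familiar word-automaton product; the one place requiring care is making the two components' acceptance conditions ``synchronize'', which the monotone latch bit achieves while provably creating no new cycles.
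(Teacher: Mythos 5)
Your proof is correct and follows exactly the ``standard product construction'' that the paper invokes for this lemma without giving any further detail. You also correctly identify and resolve the one point the paper glosses over: since acceptance is an eventuality condition rather than a condition on a final state, intersection needs the monotone latch bit (which, as you verify, creates no new cycles in the state diagram), so your write-up is if anything more complete than the paper's one-sentence justification.
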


\begin{theorem}
  \label{thm:emptiness-quasi-acyclic}
  The emptiness problem for quasi-acyclic distributed automata is undecidable.
\end{theorem}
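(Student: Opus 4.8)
The plan is to reduce from the halting problem for Turing machines, recycling the space--time exchange behind Theorem~\ref{thm:dipath-emptiness} but spreading the simulated computation over \emph{two} dimensions instead of one. In the proof of Theorem~\ref{thm:dipath-emptiness} a single node ``traverses'' a whole configuration over time, which forces its state to cycle through the tape symbols (for instance $\MBlank$ recurs over and over), and this recurrence is exactly what breaks quasi-acyclicity. The remedy is to give each \emph{cell at each point in time} its own node.

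Concretely, given a Turing machine $\Machine$, I would encode a halting computation of $\Machine$ (started on the blank tape) as a finite unlabeled \digraph{} forming a \emph{space--time grid}: its nodes are the relevant pairs $\tuple{\Cell,\Time}$, and it carries a bounded number of edge relations, three of which connect $\tuple{\Cell,\Time}$ to its causal predecessors $\tuple{\Cell-1,\Time-1}$, $\tuple{\Cell,\Time-1}$, $\tuple{\Cell+1,\Time-1}$, the rest encoding the first row and the left border. The distinguished node is the one at which the head of $\Machine$ enters $\MHaltState$. I would then build a quasi-acyclic automaton $\Automaton$ whose language consists of precisely such grids, so that the language of $\Automaton$ is nonempty if and only if $\Machine$ halts.

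The automaton $\Automaton$ would run as a carefully choreographed ``firework show''. Every node starts in a waiting state $\Waiting$ and idles there --- via a self-loop --- until it sees that all of its causal-predecessor neighbours have ``fired'', i.e., have passed from $\Waiting$ to a \emph{content state} carrying a tape symbol or a head configuration $\tuple{\MState,\MSymbol}$. The timing is the point: I would arrange that the node at $\tuple{\Cell,\Time}$ fires at round $\Cell+2\Time+1$, so that its three predecessors fire at the three immediately preceding rounds $\Cell+2\Time-2$, $\Cell+2\Time-1$, $\Cell+2\Time$, staggered by exactly one round each; thus at round $\Cell+2\Time$ the node first observes all three predecessors in content states, reads them off the three incoming relations, and at round $\Cell+2\Time+1$ it fires into the content state obtained by applying $\MTransFunc$ in the natural way, exactly as in Theorem~\ref{thm:dipath-emptiness} (except that the three relevant cells are now read directly rather than from an accumulated history). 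A node with no causal predecessors fires immediately with $\tuple{\MInitState,\MBlank}$ if it also has no horizontal predecessor --- it is then the grid origin --- and with $\MBlank$ one round after its horizontal predecessor fires (the rest of the first row); left-border nodes are symmetric. Once fired, a node freezes forever, and the accepting states are the content states whose head component is $\MHaltState$. Because every node changes state only a bounded number of times --- from $\Waiting$, through a constant number of ``partially ready'' states, to one frozen content state --- the state diagram of $\Automaton$ has no directed cycle other than self-loops; alternatively, one can present $\Automaton$ via Lemma~\ref{lem:closure-quasi-acyclic} as a union/intersection of quasi-acyclic automata responsible, respectively, for local grid consistency, correct initialization, correct application of $\MTransFunc$, and arrival of $\MHaltState$. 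The forward direction would then follow by a routine induction on $\Cell+2\Time$: the node at $\tuple{\Cell,\Time}$ fires at round $\Cell+2\Time+1$ with content equal to cell $\Cell$ of the configuration $\Config_\Time$, so if $\Machine$ halts then the associated grid is accepted.

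The hard part will be the converse: showing that $\Automaton$ accepts \emph{no} pointed \digraph{} when $\Machine$ does not halt. Since the state of a node at time $\Time$ depends only on the part of the graph within backward distance $\Time$, it suffices to argue that the distinguished node can reach an accepting content state only if its backward-reachable region is a faithful grid encoding a prefix of the (unique) run of $\Machine$, and such a prefix never contains $\MHaltState$. The subtlety is that $\MTransFunc$, as a mere function on state--symbol pairs, may still map some configuration that is \emph{unreachable} in $\Machine$ to $\MHaltState$; a malformed graph that feeds the ``wrong'' triple of contents into some node could then forge a halting state out of thin air. To prevent this, each node must verify not only the \emph{presence} of its incoming ready-signals but also their \emph{relative arrival order} --- that the three predecessors fired in three consecutive rounds, first along the first relation, then the second, then the third --- and must lapse into $\Waiting$ forever on any deviation (wrong order, a temporal gap, or two signals arriving together); this is checkable with $O(1)$ states because only the immediately preceding round is relevant. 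One then pushes an induction down the backward cone, from the unambiguous base cases (a node either has incoming edges or it does not), to conclude that any region that actually ``lights up'' must be grid-shaped and content-faithful. Making this local-to-global argument airtight, while keeping every constituent automaton quasi-acyclic, is where I expect the real work to be.
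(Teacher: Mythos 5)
Your reduction is from the halting problem via a space--time grid, whereas the paper reduces from PCP via a tree of ``fuses'' whose lengths are verified using prime factorization and signals travelling at different speeds; so the routes are genuinely different. Unfortunately, the converse direction of your reduction --- the part you yourself flag as ``the real work'' --- has a gap that the mechanism you propose cannot close. Your only handle on \emph{which} cell a node represents is its firing round $r = \Cell + 2\Time + 1$, and checking that the three predecessors fire at rounds $r-3$, $r-2$, $r-1$ in the prescribed order only constrains each predecessor to lie on the correct \emph{anti-diagonal} $\Cell' + 2\Time' = \mathrm{const}$, not at the correct cell: for instance, $\tuple{\Cell-1,\Time-1}$ and $\tuple{\Cell+1,\Time-2}$ fire at the same round. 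After tree-unravelling (Lemma~\ref{lem:tree-model-property}), the several copies of the ``same'' grid cell reached through different children of a node are distinct vertices whose contents are determined by disjoint subtrees, and nothing in your local checks forces them to agree. One can therefore feed a node a triple $\tuple{c_1,c_2,c_3}$ whose components come from \emph{different} cells (or differently shifted copies) of the real run --- say $c_1 = \tuple{\MState,\MSymbol}$ with a right-moving transition taken from one place and $c_2$ a wrong tape symbol taken from another --- and thereby forge a head configuration that never occurs in the genuine computation, from which $\MHaltState$ may well be reachable even though $\Machine$ diverges. This is the classic obstruction to enforcing a grid by local conditions on a tree, and it is exactly what the paper's prime-product ``side fuses'' are engineered to overcome: a product determines a multiset by unique factorization, whereas your timing only determines a sum. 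The missing consistency check is therefore not a routine tightening but the heart of the matter.

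Two smaller points. First, ``lapse into $\Waiting$ forever on any deviation'' is not available once a node has already left $\Waiting$ for a partially-ready state: the transition back to $\Waiting$ closes a directed cycle of length at least two in the state diagram and destroys quasi-acyclicity; you need a fresh rejecting sink state instead. Second, the forward direction and the quasi-acyclicity of the intended automaton (each node moving monotonically from $\Waiting$ through boundedly many partial states to a frozen content state) are fine as sketched.
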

\begin{proof}[Proof sketch]
  We show this by reduction from Post's correspondence problem (PCP).
  An instance $\Instance$ of PCP consists of
  a collection of pairs of nonempty finite words
  $\tuple{\UpperWord_\Index,\LowerWord_\Index}_{\Index\in\IndexSet}$
  over the alphabet $\set{0,1}$,
  indexed by some finite set of integers $\IndexSet$.
  It is convenient to view each pair
  $\tuple{\UpperWord_\Index,\LowerWord_\Index}$
  as a domino tile
  labeled with $\UpperWord_\Index$ on the upper half
  and $\LowerWord_\Index$ on the lower half.
  The problem is to decide if there exists a nonempty sequence
  $\Solution = \tuple{\Index_1, \dots, \Index_\SolutionSize}$
  of indices in~$\IndexSet$,
  such that the concatenations
  $\UpperWord_\Solution = \UpperWord_{\Index_1} \!\cdots \UpperWord_{\Index_\SolutionSize}$ and 
  $\LowerWord_\Solution = \LowerWord_{\Index_1} \!\cdots \LowerWord_{\Index_\SolutionSize}$
  are equal.
  We construct a quasi-acyclic automaton $\Automaton$
  whose language is nonempty if and only if
  $\Instance$ has such a solution~$\Solution$.

  Metaphorically speaking,
  our construction can be thought of as a perfectly timed “firework show”,
  whose only “spectator” will see a putative solution
  $\Solution = \tuple{\Index_1, \dots, \Index_\SolutionSize}$,
  and be able to check whether it is indeed a valid solution of $\Instance$.
  Our “spectator” is the distinguished node~$\Root$ of the pointed \digraph{}
  on which $\Automaton$ is run.
  We assume that $\Root$ has $\SolutionSize$ incoming neighbors,
  one for each element of $\Solution$.
  Let $\Node_\IIndex$ denote the neighbor corresponding to $\Index_\IIndex$,
  for $1 \leq \IIndex \leq \SolutionSize$.
  Similarly to our proof of Theorem~\ref{thm:dipath-emptiness},
  we use the time of $\Automaton$ to represent
  the spatial dimension of the words $\UpperWord_\Solution$ and $\LowerWord_\Solution$.
  On an intuitive level,
  $\Root$ will “witness” simultaneous left-to-right traversals of
  $\UpperWord_\Solution$ and $\LowerWord_\Solution$,
  advancing by one bit per time step,
  and it will check that the two words match.
  It is the task of each node $\Node_\IIndex$
  to send to $\Root$ the required bits of
  the subwords $\UpperWord_{\Index_\IIndex}$ and $\LowerWord_{\Index_\IIndex}$
  at the appropriate times.
  In keeping with the metaphor of fireworks,
  the correct timing can be achieved by attaching to $\Node_\IIndex$
  a carefully chosen “fuse”,
  which is “lit” at time $0$.
  Two separate “fire” signals will travel at different speeds
  along this (admittedly sophisticated) “fuse”,
  and once they reach $\Node_\IIndex$,
  they trigger the “firing” of
  $\UpperWord_{\Index_\IIndex}$ and $\LowerWord_{\Index_\IIndex}$, respectively.

  We now go into more details.
  Using the labeling of the input graph,
  the automaton $\Automaton$ distinguishes between
  $2\InstanceSize+1$ different types of nodes:
  two types $\Index$ and $\Index'$ for each index $\Index\in\IndexSet$,
  and one additional type $\Spectator$ to identify the “spectator”.
  Motivated by Lemma~\ref{lem:tree-model-property},
  we suppose that the input graph is a pointed \ditree{},
  with a very specific shape that encodes a putative solution
  $\Solution = \tuple{\Index_1, \dots, \Index_\SolutionSize}$.
  An example illustrating the following description
  of such a \ditree{}-encoding
  is given in Figure~\ref{fig:firework-show}.
  Although $\Automaton$ is not able to enforce
  all aspects of this particular shape,
  we will make sure that it accepts such a structure
  if its language is nonempty.
  The root (and distinguished node) $\Root$
  is the only node of type $\Spectator$.
  Its children $\Node_1, \dots, \Node_\SolutionSize$
  are of types $\Index_1, \dots, \Index_\SolutionSize$,
  respectively.
  The “fuse” attached to each child~$\Node_\IIndex$
  is a chain of $\IIndex-1$ nodes
  that represents the multiset of indices occurring in
  the $(\IIndex-1)$-prefix of $\Solution$.
  More precisely,
  there is an induced \dipath{}
  $\Node_{\IIndex,1} \rightarrow \cdots\, \Node_{\IIndex,\IIndex-1} \rightarrow \Node_\IIndex$,
  such that the multiset of types of the nodes
  $\Node_{\IIndex,1}, \dots, \Node_{\IIndex,\IIndex-1}$
  is equal to the multiset of indices occurring in
  $\tuple{\Index_1, \dots, \Index_{\IIndex-1}}$.
  We do not impose any particular order on those nodes.
  Finally,
  each node of type $\Index \in \IndexSet$
  also has an incoming chain of nodes of type $\Index'$
  \mbox{(depicted in gray in Figure~\ref{fig:firework-show})},
  whose length corresponds exactly to
  the product of the types occurring on the part of the “fuse”
  below that node.
  That is,
  if we define the alias $\Node_{\IIndex,\IIndex} \defeq \Node_\IIndex$,
  then for every node $\Node_{\IIndex,\IIIndex}$ of type $\Index \in \IndexSet$,
  there is an induced \dipath{}
  $\Node_{\IIndex,\IIIndex,1} \rightarrow \cdots\,
   \Node_{\IIndex,\IIIndex,\PrimeProduct} \rightarrow
   \Node_{\IIndex,\IIIndex}$,
  where all the nodes
  $\Node_{\IIndex,\IIIndex,1},\dots,\Node_{\IIndex,\IIIndex,\PrimeProduct}$
  are of type $\Index'$,
  and the number $\PrimeProduct$ is equal to the product of the types
  of the nodes $\Node_{\IIndex,1},\dots,\Node_{\IIndex,\IIIndex-1}$
  (which is $1$ if $\IIIndex = 1$).
  We shall refer to such a chain
  $\Node_{\IIndex,\IIIndex,1},\dots,\Node_{\IIndex,\IIIndex,\PrimeProduct}$
  as a “side fuse”.

  \begin{figure}[tbp]
    \centering
    \begin{tikzpicture}[
    thick,
    edge from parent/.style = {draw, <-},
    level 1/.style = {level distance=9ex},
    level 2/.style = {level distance=8ex},
    sibling distance = 22ex,
    every node/.style = {inner sep=0ex},
    nodraw/.style = {draw=none, fill=none, text=black},
    bold/.style = {draw=darkgray, line width=0.41ex},
    main node/.style = {draw, circle, inner sep=0ex, minimum size=3ex},
    root node/.style = {main node, fill=black, text=white},
    side node/.style = {main node, draw=gray, fill=gray, text=white},
    side line/.style = {grow=right, every node/.style={side node}, every child/.style={}, level distance=5ex},
    brace/.style = {decorate, decoration={brace,mirror,raise=1ex}, darkgray},
    brace label/.style = {minimum size=0ex, below=2.2ex},
    domino/.style = {draw, rectangle split, rectangle split parts=2, rounded corners=0.6ex, inner sep=1ex, minimum width=5.5ex}
  ]
  \newcommand{\x}[1]{\mathbf{#1}}
  \path[every node/.style=main node] node[root node] (root) {$\x\epsilon$}
     child[xshift=6ex] {node[bold] {$\x5$} [child anchor=50,parent anchor=west,
                                          every child/.style={grow=down,child anchor=border,parent anchor=border}]
       child[side line,parent anchor=border] {node {$5'$}}
     }
     child {node[bold] {$\x3$} [every child/.style={grow=down}]
       child {node {$5$}
         child[side line] {node {$5'$}}
       }
       child[side line] {node {$3'$} child {node[nodraw] {\footnotesize\,\dots} child {node {$3'$}}}}
     }
     child {node[bold] {$\x7$} [every child/.style={grow=down}]
       child {node {$5$}
         child {node {$3$}
           child[side line] {node {$3'$}}
         }
         child[side line] {node {$5'$} child {node {$5'$} child {node {$5'$}}}}
       }
       child[side line] {node {$7'$} child {node[nodraw] {\footnotesize\,\dots} child {node {$7'$}}}}
     }
     child {node[bold] {$\x3$} [child anchor=north west,parent anchor=east,
                              every child/.style={grow=down,child anchor=border,parent anchor=border}]
       child {node {$3$}
         child {node {$7$}
           child {node {$5$}
             child[side line] {node {$5'$}}
           }
           child[side line] {node {$7'$} child {node[nodraw] {\footnotesize\,\dots} child {node {$7'$}}}}
         }
         child[side line] {node {$3'$} child {node[nodraw] {\footnotesize\,\dots} child {node {$3'$}}}}
       }
       child[side line] {node {$3'$} child {node[nodraw] {\footnotesize\,\dots} child {node {$3'$}}}}
     }
  ;
  \draw[brace] (root-2-2.south west) -- node[brace label] {$5$} (root-2-2-1-1.south east);
  \draw[brace] (root-3-2.south west) -- node[brace label] {$3 \times 5$} (root-3-2-1-1.south east);
  \draw[brace] (root-3-1-2.south west) -- node[brace label] {$3$} (root-3-1-2-1-1.south east);
  \draw[brace] (root-4-2.south west) -- node[brace label] {$5 \times 7 \times 3$} (root-4-2-1-1.south east);
  \draw[brace] (root-4-1-2.south west) -- node[brace label] {$5 \times 7$} (root-4-1-2-1-1.south east);
  \draw[brace] (root-4-1-1-2.south west) -- node[brace label] {$5$} (root-4-1-1-2-1-1.south east);
  \matrix[xshift=-20ex,yshift=-28ex,row sep=1ex,column sep=1ex,matrix of nodes] {
    \node[domino] {$010$ \nodepart{two} $0$};  &
    \node[domino] {$00$ \nodepart{two} $100$}; & 
    \node[domino] {$11$ \nodepart{two} $01$};  &
    \node[domino] {$00$ \nodepart{two} $100$}; \\
    $\mathbf{5}$ & $\mathbf{3}$ & $\mathbf{7}$ & $\mathbf{3}$ \\
  };
\end{tikzpicture}
    \caption{
      Timing a “firework show” to prove Theorem~\ref{thm:emptiness-quasi-acyclic}.
      The domino tiles on the bottom-left visualize the solution $\tuple{5,3,7,3}$
      for the instance
      $\set{3 {\,\mapsto\,} \tuple{00,100},\,
        5 {\,\mapsto\,} \tuple{010,0},\,
        7 {\,\mapsto\,} \tuple{11,01}}$
      of PCP.
      This solution is encoded into the labeled \ditree{} above,
      with node types $\Spectator$, $3$, $5$, $7$, $3'$, $5'$, $7'$.
      Each domino is represented by a bold-highlighted white node of the appropriate type.
      The “fuse” of such a bold node consists of
      the chain of white nodes below it,
      which lists the indices of the preceding dominos in an arbitrary order.
      Each white node also has a gray “side fuse”
      whose length is equal to the product of the white types occurring below that node.
      The “firework show” observed at the root will feature two simultaneous bitstreams,
      which both represent the sequence $010001100$.
    }
    \label{fig:firework-show}
  \end{figure}

  The automaton $\Automaton$ has to perform two tasks simultaneously:
  First,
  assuming it is run on a \ditree{}-encoding of a sequence $\Solution$,
  exactly as specified above,
  it must verify that $\Solution$ is a valid solution,
  i.e., that the words $\UpperWord_\Solution$ and $\LowerWord_\Solution$ match.
  Second,
  it must ensure that the input graph is indeed
  sufficiently similar to such a \ditree{}-encoding.
  In particular,
  it has to check that
  the “fuses” used for the first task are consistent with each other.
  Since, by Lemma~\ref{lem:closure-quasi-acyclic},
  quasi-acyclic distributed automata are closed under intersection,
  we can consider the two tasks separately,
  and implement them using two independent automata $\Automaton_1$ and~$\Automaton_2$.
  In the following,
  we describe both devices in a rather informal manner.
  The important aspect to note is
  that they can be easily formalized using quasi-acyclic state diagrams.

  We start with $\Automaton_1$,
  which verifies the solution $\Solution$.
  It takes into account
  only nodes with types in $\IndexSet \cup \set{\Spectator}$
  (thus ignoring the gray nodes in Figure~\ref{fig:firework-show}).
  At nodes of type $\Index \in \IndexSet$,
  the states of $\Automaton_1$ have two components,
  associated with the upper and lower halves of the
  domino~$\tuple{\UpperWord_\Index,\LowerWord_\Index}$.
  If a node of type $\Index$ sees
  that it does not have any incoming neighbor,
  then the upper and lower components of its state
  immediately start traversing sequences of substates
  representing the bits of $\UpperWord_\Index$ and $\LowerWord_\Index$,
  respectively.
  Since those substates must keep track of
  the respective positions within $\UpperWord_\Index$ and $\LowerWord_\Index$,
  none of them can be visited twice.
  After that,
  both components loop forever on a special substate $\Finished$,
  which indicates the end of transmission.
  The other nodes of type $\Index$
  keep each of their two components in a waiting status,
  indicated by another substate~$\Waiting$,
  until the corresponding component of their incoming neighbor
  reaches its last substate before~$\Finished$.
  This constitutes the aforementioned “fire” signal.
  Thereupon,
  they start traversing the same sequences of substates as in the previous case.
  Note that both components are updated independently of each other,
  hence there can be an arbitrary time lag between
  the “traversals” of $\UpperWord_\Index$ and $\LowerWord_\Index$.
  Now, assuming the “fuse” of each node $\Node_\IIndex$
  really encodes the multiset of indices occurring in
  $\tuple{\Index_1, \dots, \Index_{\IIndex-1}}$,
  the delay accumulated along that “fuse” will be such that
  $\Node_\IIndex$ starts “traversing”
  $\UpperWord_{\Index_\IIndex}$ and $\LowerWord_{\Index_\IIndex}$
  at the points in time corresponding to their respective starting positions
  within $\UpperWord_\Solution$ and $\LowerWord_\Solution$.
  That is,
  for~$\UpperWord_{\Index_\IIndex}$ it starts at time
  $\length{\UpperWord_{\Index_1} \!\cdots \UpperWord_{\Index_{\IIndex-1}}} + 1$,
  and for~$\LowerWord_{\Index_\IIndex}$ at time
  $\length{\LowerWord_{\Index_1} \!\cdots \LowerWord_{\Index_{\IIndex-1}}} + 1$.
  Consequently,
  in each round
  $\Time \leq \min\set{\card{\UpperWord_\Solution}, \card{\LowerWord_\Solution}}$,
  the root $\Root$ receives the $\Time$-th bits of
  $\UpperWord_\Solution$ and $\LowerWord_\Solution$.
  At most two distinct children send bits at the same time,
  while the others remain
  in some state $\State \in \set{\Waiting,\Finished}^2$.
  With this,
  the behavior of $\Automaton_1$ at $\Root$ is straightforward:
  It enters its only accepting state precisely if
  all of its children
  have reached the state~$\tuple{\Finished,\Finished}$
  and it has never seen any mismatch between the upper and lower bits.

  We now turn to $\Automaton_2$,
  whose job is to verify that
  the “fuses” used by $\Automaton_1$ are reliable.
  Just like $\Automaton_1$,
  it works under the assumption
  that the input graph is a \ditree{} as specified previously,
  but with significantly reduced guarantees:
  The root could now have an arbitrary number of children,
  the “fuses” and “side fuses” could be of arbitrary lengths,
  and each “fuse” could represent an arbitrary multiset of indices in $\IndexSet$.
  Again using an approach reminiscent of fireworks,
  we devise a protocol in which
  each child~$\Node$ will send
  two distinct signals to the root~$\Root$.
  The first signal $\Signal_1$ indicates that
  the current time~$\Time$ is equal to the product of
  the types of all the nodes on $\Node$'s “fuse”.
  Similarly,
  the second signal $\Signal_2$ indicates that
  the current time is equal to that same product
  multiplied by $\Node$'s own type.
  To achieve this,
  we make use of the “side fuses”,
  along which two additional signals
  $\SideSignal_1$ and $\SideSignal_2$ are propagated.
  For each node of type $\Index \in \IndexSet$,
  the nodes of type $\Index'$ on the corresponding “side fuse”
  operate in a way such that
  $\SideSignal_1$ advances by one node per time step,
  whereas $\SideSignal_2$ is delayed by $\Index$ time units at every node.
  Hence,
  $\SideSignal_1$ travels $\Index$ times faster than~$\SideSignal_2$.
  Building on that,
  each node $\Node$ of type~$\Index$
  (not necessarily a child of the root)
  sends $\Signal_1$ to its parent,
  either at time~$1$, if it does not have any predecessor on the “fuse”,
  or one time unit before receiving $\Signal_2$ from its predecessor.
  The latter is possible,
  because the predecessor also sends a pre-signal~$\PreSignal_2$
  before sending $\Signal_2$.
  Then,
  $\Node$ checks that signal $\SideSignal_1$ from its “side fuse”
  arrives exactly at the same time as
  $\Signal_2$ from its predecessor,
  or at time $1$ if there is no predecessor.
  Otherwise,
  it immediately enters a rejecting state.
  This will guarantee, by induction, that the length of the “side fuse”
  is equal to the product of the types on the “fuse” below.
  Finally,
  two rounds prior to receiving $\SideSignal_2$,
  while that signal is still being delayed
  by the last node on the “side fuse”,
  $\Node$ first sends the pre-signal $\PreSignal_2$\!,
  and then the signal $\Signal_2$ in the following round.
  For this to work,
  we assume that
  each node on the “side fuse” waits for at least two rounds
  between receiving $\SideSignal_2$ from its predecessor
  and forwarding the signal to its successor,
  i.e., all indices in $\IndexSet$ must be strictly greater than~$2$.
  Due to the delay accumulated by $\SideSignal_2$ along the “side fuse”,
  the time at which $\Signal_2$ is sent corresponds precisely to
  the length of the “side fuse” multiplied by $\Index$.

  Without loss of generality,
  we require that
  the set of indices $\IndexSet$ contains only prime numbers
  (as in Figure~\ref{fig:firework-show}).
  Hence,
  by the unique-prime-factorization theorem,
  each multiset of numbers in $\IndexSet$ is uniquely determined
  by the product of its elements.
  This leads to a simple verification procedure
  performed by $\Automaton_2$ at the root:
  At time $1$,
  node $\Root$ checks that it receives $\Signal_1$ and not $\Signal_2$.
  After that,
  it expects to never again see $\Signal_1$ without $\Signal_2$,
  and remains in a loop as long as it gets
  either no signal at all or both $\Signal_1$ and $\Signal_2$.
  Upon receiving $\Signal_2$ alone,
  it exits the loop
  and verifies that all of its children have sent both signals,
  which is apparent from the state of each child.
  The root rejects immediately
  if any of the expectations above are violated,
  or if two nodes with different types
  send the same signal at the same time.
  Otherwise,
  it enters an accepting state after leaving the loop.
  Now,
  consider the sequence
  $\TimeSequence = \tuple{\Time_1,\dots,\Time_{\SolutionSize+1}}$
  of rounds in which $\Root$ receives
  at least one of the signals $\Signal_1$ and $\Signal_2$.
  It is easy to see by induction on $\TimeSequence$ that
  successful completion of the procedure above
  ensures that there is a sequence
  $\Solution = \tuple{\Index_1, \dots, \Index_\SolutionSize}$
  of indices in $\IndexSet$
  with the following properties:
  For each $\IIndex \in \set{1,\dots,\SolutionSize}$,
  the root has at least one child $\Node_\IIndex$ of type $\Index_\IIndex$
  that sends $\Signal_1$ at time~$\Time_\IIndex$
  and $\Signal_2$ at time~$\Time_{\IIndex+1}$,
  and the “fuse” of $\Node_\IIndex$ encodes precisely
  the multiset of indices occurring in
  $\tuple{\Index_1, \dots, \Index_{\IIndex-1}}$.
  Conversely,
  each child of $\Root$ can be associated in the same manner
  with a unique element of $\Solution$.

  To conclude our proof,
  we have to argue that the automaton $\Automaton$,
  which simulates $\Automaton_1$ and $\Automaton_2$ in parallel,
  accepts some labeled pointed \digraph{}
  if and only if
  $\Instance$ has a solution~$\Solution$.
  The “if” part is immediate,
  since, by construction,
  $\Automaton$ accepting a \ditree{}-encoding of $\Solution$
  is equivalent to $\Solution$ being a valid solution of $\Instance$.
  To show the “only if” part,
  we start with a pointed \digraph{} accepted by $\Automaton$,
  and incrementally transform it into a \ditree{}-encoding of a solution $\Solution$,
  while maintaining acceptance by $\Automaton$:
  First of all,
  by Lemma~\ref{lem:tree-model-property},
  we may suppose that the \digraph{} is a \ditree{}.
  Its root must be of type~$\Spectator$,
  since $\Automaton$ would not accept otherwise.
  Next,
  we require that $\Automaton$ raises an alarm
  at nodes that see an unexpected set of states in their incoming neighborhood,
  and that this alarm is propagated up to the root,
  which then reacts by entering a rejecting sink state.
  This ensures that the repartition of types is consistent with our specification;
  for example,
  that the children of a node of type~$\Index'$ must be of type~$\Index'$ themselves.
  We now prune the \ditree{}
  in such a way that
  nodes of type~$\Index$ keep at most two children
  and nodes of type~$\Index'$ keep at most one child.
  (The behavior of the deleted children must be indistinguishable from the behavior of the remaining children,
  since otherwise an alarm would be raised.)
  This leaves us with a \ditree{}
  corresponding exactly to the input “expected” by the automaton $\Automaton_2$.
  Since it is accepted by $\Automaton_2$,
  this \ditree{} must be very close to an encoding
  of a solution $\Solution = \tuple{\Index_1, \dots, \Index_\SolutionSize}$,
  with the only difference that
  each element $\Index_\IIndex$ of~$\Solution$ may be represented
  by several nodes $\Node_\IIndex^1,\dots,\Node_\IIndex^\CloneCount$.
  However,
  we know by construction that $\Automaton$ behaves the same
  on all of these representatives.
  We can therefore remove the subtrees
  rooted at $\Node_\IIndex^2,\dots,\Node_\IIndex^\CloneCount$,
  and thus
  we obtain a \ditree{}-encoding of $\Solution$
  that is accepted by~$\Automaton$.
\end{proof}

\subsection*{Acknowledgments}
Fabian Reiter wants to thank Olivier Carton
for several pleasant discussions and constructive comments.
This work is supported by the
\href{http://delta.labri.fr/}{DeLTA project} (ANR-16-CE40-0007).

\bibliographystyle{eptcs}
\bibliography{da-paper.bib}

\end{document}